\newif \ifcolor
\newcommand{\yanote}[1]  {\textcolor{red}{{\bf (Yosi:}{#1}{\bf )}}}
\newcommand{\abs}[1]{\left| #1 \right|}
\newtheorem{myclaim}{Claim}
\newcommand{\prnt}[1]{\left( #1 \right)}
\newcommand{\prntt}[1]{\left[ #1 \right]}\newcommand{\prnttt}[1]{\left\{ #1 \right\}}
\newcommand{\ketbra}[2]{ \left| #1 \right\rangle\left\langle #2 \right|}
\long\def\/*#1*/{}
\newcommand{\doublewidetilde}[1]{{%
  \mathpalette\double@widetilde{#1}%
}}
\newcommand{\double@widetilde}[2]{%
  \sbox\z@{$\m@th#1\widetilde{#2}$}%
  \ht\z@=.9\ht\z@
  \widetilde{\box\z@}%
}
\begin{document}
%
%

\bibliographystyle{apsrev4-1}

\preprint{APS/123-QED}

\title{Robust Diabatic Quantum Search by Landau-Zener-St\"uckelberg Oscillations}

\author{Yosi Atia$^1$}
\email[Corresponding author. Email:]{g.yosiat@gmail.com}
\author{Yonathan Oren$^1$}
\author{Nadav Katz$^2$}

\affiliation{$^1$The Rachel and Selim Benin School of Computer Science and Engineering, The Hebrew University, Jerusalem 91904,  Israel}
\affiliation{$^2$ The Racah Institute of Physics, The Hebrew University, Jerusalem 91904,  Israel
}

\date{\today}

\begin{abstract}
Quantum computation by the adiabatic theorem requires a slowly varying Hamiltonian with respect to the spectral gap.
We show that the Landau-Zener-St{\"u}ckelberg oscillation phenomenon, that naturally occurs in quantum two level systems under non-adiabatic periodic drive, can be exploited to find the ground state of an N dimensional Grover Hamiltonian. The total runtime of this method is $O(\sqrt{2^n})$ which is equal to the computational time of the Grover algorithm in the quantum circuit model. An additional periodic drive can suppress a large subset of Hamiltonian control errors using coherent destruction of tunneling, providing superior performance compared to standard algorithms. 
\end{abstract}
	
\maketitle


Adiabatic Quantum Computation (AQC)~\cite{FGGS00,AL18} is a computational model, motivated by the physical phenomenon described by the adiabatic theorem, which states that if a system is prepared in the ground state of an initial Hamiltonian, and the Hamiltonian slowly varies in time, then  it is guaranteed that the evolution will be adiabatic - meaning that the system will remain close to its instantaneous ground state throughout \cite{Kato1950,Messiah}. By encoding a solution for a computational problem in the ground state of the finally applied Hamiltonian, one can exploit this phenomenon to produce the aforementioned ground state, and thus produce a solution to the problem. The maximal rate of change allowed for such evolution usually scales with the inverse square of the energy gap between the ground state and the first excited state~\cite{FGGS00}.

The Grover problem~\cite{Grover96}, also known as \emph{The Unstructured Search Problem} is one of the few problems solvable by a native adiabatic algorithm, which achieves the same performance as the best possible algorithm in the circuit model \cite{BBBV94} (for other native algorithms see \cite{Hen14} and the partially adiabatic \cite{SNK12}). The input to the problem is an $n$ qubit Hamiltonian, which can only be used as a black box, i.e., can be switched on or off \footnote{We have used units very loosely in this work. See discussion  at the Supplementary Material [URL will be inserted by publisher]}

\begin{equation}
H_{p}=I_N-\ket{y}\bra{y},
\end{equation}
where $I_N$ is the $N\times N$ identity matrix with $N=2^n$, and the problem is to find the unknown string $y$. The problem is comparable to finding the ground state of a known multiple-qubit Hamiltonian; the ground state might be computationally hard to find and therefore can be considered ``computationally unknown'' \cite{AA17}.

An adiabatic algorithm for the search problem was suggested by \cite{FGGS00}.
The system is initialized to a symmetric superposition of states denoted $\ket{u}=\ket{+\dots+}$,  and then evolves by the time-dependent Hamiltonian 
\begin{equation} \label{eq:AdiGrover}
\begin{split}
H_{G}\left(s(t)\right)=&
 (1-s(t))\cdot (I_N-\ket{u}\bra{u})
 \\
 &+ s(t)\cdot(I_N-\ket{y}\bra{y}),
\end{split}
\end{equation}
where the \emph{control function} 
$s(t):[t_i,t_f]\rightarrow[0,1]$
is initialized to 0 and increases monotonically with time to 1.  
The minimal gap for $n$ qubit systems is $\Delta=\sqrt{2^{-n}}$. Evolving with a linear $s(t)$  requires $O(2^n)$ time, while a specially tailored control function, whose rate matches the instantaneous spectral gap, generates the ground state of $H_p$ in the optimal time, $O(\sqrt{2^{n}})$ \cite{DMV01,RC02}.

In this work, we introduce a \emph{diabatic} algorithm for the Grover problem, denoted algorithm $\mathcal A$, whose performance matches both the optimized adiabatic and the circuit model algorithms \cite{RC02, Grover96,BBBV94}, by setting $s(t)=(1-A\cos(\omega t))/2$ where $\omega\gg \Delta$. The system passes the minimal gap multiple times diabatically and is effectively evolving by a Landau-Zener-Stu\"kelberg (LZS) Hamiltonian \cite{Landau32,Zener32, Stueckelberg32,OV09,SAN10}. Abandoning adiabaticity gave us more freedom in algorithm design. In algorithm $\mathcal B$, we add an oscillating term $B\cos (\omega t) \ketbra{u}{u}$ which yields improved robustness to Hamiltonian control errors relative to previous algorithms \cite{Grover96,RC02}.

We start by analyzing the Landau-Zener-Stuckelberg Hamiltonian (for a generic  two level system with bare states $\ket{0},\ket{1}$):
\begin{equation}
\begin{split}
H_{\mathrm{LZS}}(t) :=& \frac{1}{2} \left(
-A\cos (\omega t)\sigma_z -\Delta \sigma_x  
\right)\\
=&\frac{1}{2}\begin{bmatrix}
	-A\cos(\omega t) & -\Delta \\
	-\Delta &A \cos(\omega t)
\end{bmatrix}.
\end{split}
\end{equation}
The sinusoidal drive causes the Hamiltonian to exhibit avoided level crossings at $t=\pi(k+\frac{1}{2})/\omega$ for $k\in \mathbbm{N}$ with a minimal energy gap of $\Delta$ (see Fig. \ref{fig:rate_and_energy}).

\begin{figure} [h]
\begin{tikzpicture} [scale=1]
\ifcolor
\else
\selectcolormodel{gray}
\fi
\fill [gray!30] (0,-1) rectangle (4,4.5);
\fill [green!30] (4,-1) rectangle (8,4.5);


\draw[thin,<-] (8.3,0) -- (0,0) node [left] {\rotatebox{90} {\mbox {$A\cos(\omega t)$}}};
\draw[color=purple]   plot [domain=0:8.2,samples=100] (\x,{cos(\x r /2 *pi)});

\foreach \x in {1,2,3,4,5,6,7,8}
{
\draw [very thin, color=gray]  (\x,-1) -- (\x,1) node[shift={(0,0.5)}, color=black] { {  
{\ifthenelse{\equal{\x}{0}}
{$0$}
{{\ifthenelse{\equal{\x}{1}} 
{$\frac{\pi}{2\omega}$}
{$\frac{\x\pi}{2\omega}$} 
}}}}};}

\foreach \x in {1,2,3,4,5,6,7,8}
{
\draw [very thin, color=gray]  (\x,2) -- (\x,4);
}

\draw [->, very thin]  (0,2) -- (0,4.5);
\draw [->, very thin]  (0,-1) -- (0,1.5);

\foreach \x in {0,4}
{
\node at (\x+.2, 2.3) {\mbox {$\ket{0}$}};
\node at (\x+.2, 3.7) {\mbox {$\ket{1}$}};
}

\foreach \x in {2,6}
{
\node at (\x+.2, 2.3) {\mbox {$\ket{1}$}};
\node at (\x+.2, 3.7) {\mbox {$\ket{0}$}};
}

\draw[thin,<-] (8.3,3) -- (0,3) node [left] {\rotatebox{90} {Energy}};

\node[] at (8.1,2.85) {\mbox{$t$}};
\node[] at (8.1,-0.15) {\mbox{$t$}};

\draw[color=red]   plot [domain=0:8.2,samples=300] (\x,{sqrt(cos(\x r /2 *pi)^2+0.01)+3});
\draw[color=blue]   plot [domain=0:8.2,samples=300] (\x,{-sqrt(cos(\x r /2 *pi)^2+0.01)+3});

\end{tikzpicture}
\caption{\label{fig:rate_and_energy}{\ifcolor\else(Color online) \fi}Top: the instantaneous eigenvalues of $H_{\mathrm{LZS}}(t)$; bottom: the drive $A \cos (\omega t)$. Avoided crossings occur at  $t=\pi (k+\frac{1}{2}) / \omega$ for integer $k$, when $\cos(\omega t)=0$. Each period of the drive (gray or green background) contains a double crossing. Note that the ground state and the excited state alternate at every avoided crossing.}
\end{figure}
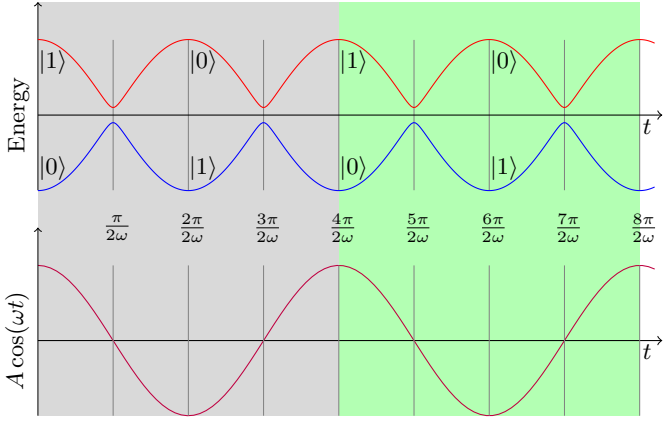

In order to gain some intuition, consider a system initialized to the state $\ket{0}$ and driven through the avoided crossing twice (i.e., one period of $s(t)$). After the double-crossing, the population of the  state $\ket{1}$, denoted $P_+^{(2)}$ approaches 0 for both $\omega\ll \Delta^2/A$ and for $\omega \gg A$: if $\omega \ll \Delta ^2/A$, the adiabatic condition holds, the system follows the ground state at all times, and thus returns to $\ket{0}$. In  the limit $\omega\gg A$, the propagator approaches unity and the state remains unperturbed. In intermediate cases an interesting phenomenon occurs: in the first passage of the avoided crossing the system transfers almost perfectly from the initial ground state to the final excited state, however a tiny amplitude leaks to orthogonal state. The populations of the excited state and the ground state gain different phases between the two  crossings, and finally interfere again in the second crossing.  $P_+^{(2)}$ is affected by this interference and oscillates with the periodicity of the control $2\pi/\omega$ in what is known as \textit{Landau-Zener-Stuckelberg oscillations}  \cite{Landau32,Zener32, Stueckelberg32} (See Fig. \ref{fig:main_plot}).

In the regime $\omega \gg \Delta$  one can use the rotating wave approximation  
(see \cite{AJZN07}, \footnote{Supplemental Material at [URL will be inserted by publisher]}) to show that with periodic drive the system oscillates around the $x$ axis in the Bloch sphere with frequency 
\begin{equation} \label{eq:Omega}
\Omega = \Delta \left|J_0 \left(\frac{A}{\omega}\right)\right|.
\end{equation}
The algorithm will fail when $A/\omega$ equals a root of the Bessel function $J_0$, where a coherent destruction of transition (CDT) occurs, and $\Omega=0$ (\cite{GDJH91}, see also \cite{AJZN07,SAN10}). CDT was previously suggested as a method  to control interactions in quantum systems \cite{VUS04, LSCSZMA07,ZGKH09} and we use these ideas in algorithm $\mathcal B$.

\begin{figure} 
\centering
\ifcolor
\includegraphics[scale=0.5] {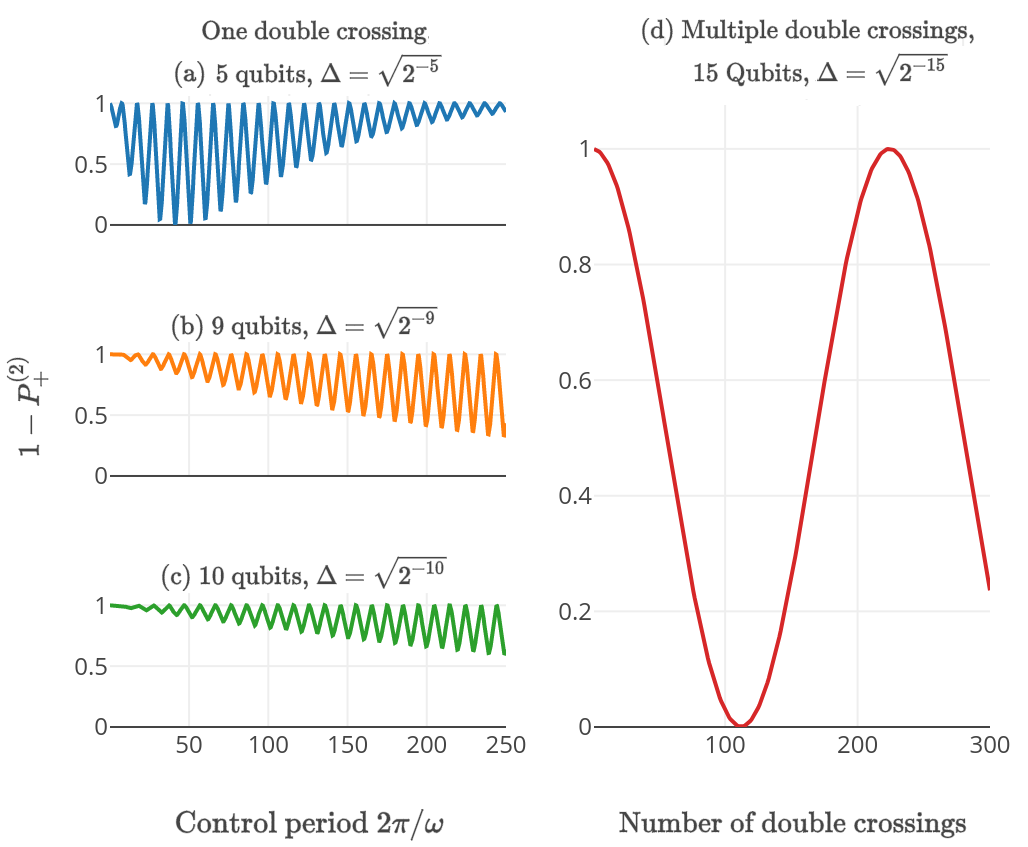}
\else
\includegraphics[scale=0.5] {P_+_2gray}
\fi
\caption{\label{fig:main_plot}
{\ifcolor\else(Color online) \fi}Numerical simulation of LZS  oscillations solving the Grover problem where the system is initialized to the ground state at $t=0$. (a)-(c) - the ground state population after a double crossing with different $\omega$ and gaps. This probability reaches 1 both for $\omega \gg A$ and for $\omega \ll \Delta^2/A$ (only visible in (a)). For the first limit the system is almost unperturbed, while in second limit the process is adiabatic and the system follows the instantaneous ground state and returns to its initial state. While the rotating wave approximation holds ($\omega \gg \Delta$),  the system oscillates by the Rabi frequency $\Omega=\Delta \left|J_0 \left(\frac{A}{\omega}\right)\right| \cdot 2\pi/\omega$. The zeros of the Bessel function correspond to coherent destruction of tunneling, where $1-P_+^{(2)}=1$ in the graph.  The approximation fails as $\omega\lesssim \Delta$ in (a). (d) Numerical simulation of the ground state population following multiple double crossings in a 15-qubit system.}  
\end{figure}

Interestingly, the Grover Hamiltonian $H_\mathrm{G}(t)$ with a periodic control function is closely related to $H_{\mathrm{LZS}}(t)$. The key to the mapping is the invariance of the subspace $V=\mathrm{span} \left\{\ket{u},\ket{y} \right\}$ to $H_\mathrm{G}(s)$ for all $s$ \footnote{See Supplemental Material at [URL will be inserted by publisher] for proof.}. Although  $V$ is isomorphic to the Hilbert space of a 2-level system, one cannot map $\ket{u},\ket{y}$ to $\ket{0},\ket{1}$ trivially in $H_\mathrm{LZS}$ since the first pair is only approximately orthogonal. To overcome this problem  we define a new basis $\ket{\bar 0},\ket{\bar 1}$, exponentially close to $\ket{u}$ and $\ket{y}$, as stated in the following claim \footnote{See Supplemental Material at [URL will be inserted by publisher] for proof.}:

\begin{myclaim} \label{clm:equiv}
The projection of $H_{\mathrm G}(s(t))$ on $V$ satisfies:

\begin{equation} \label{eq:HinV}
H_{\mathrm G}(s(t)) \bigg|_V = \frac{I_2}{2} + \left(s(t)-\frac{1}{2}\right) \sqrt{1-\Delta^2} \bar\sigma_z -\frac{\Delta}{2}\bar\sigma_x, 
\end{equation}
where $\Delta=\braket{y|u}$. The operators $\bar \sigma_x ,\bar \sigma_z$ act on the states
\begin{equation}
\begin{split}
\ket{\bar 0}&=\sqrt{\frac{1+\sqrt{1-\Delta^{2}}}{2}} \ket{u} + \sqrt{\frac{1-\sqrt{1-\Delta^{2}}}{2}}\ket{u^\perp} \\
\ket{\bar 1}&=\sqrt{\frac{1-\sqrt{1-\Delta^{2}}}{2}} \ket{u} - \sqrt{\frac{1+\sqrt{1-\Delta^{2}}}{2}}\ket{u^\perp},
\end{split}
\end{equation}
where  ${\ket{u^\perp}\mathrel{\mathop:}=
 \frac{\ket{y}-\Delta\ket{u}} {\sqrt{1-\Delta^2}}}$ is the vector orthogonal to $\ket{u}$ in $V$.
\end{myclaim}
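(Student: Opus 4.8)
The plan is to collapse the statement to a $2\times2$ computation inside $V$ and then apply a single, $s$-independent change of basis.

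\emph{Step 1 (restrict to $V$).} Since $V=\mathrm{span}\{\ket u,\ket y\}$ is $H_{\mathrm G}(s)$-invariant for all $s$ (as noted above), the restriction $H_{\mathrm G}(s(t))\big|_V$ equals $P_V H_{\mathrm G}(s(t)) P_V$; because $\ket u,\ket y\in V$ while $I_N\big|_V=I_2$, this is just the operator $I_2-(1-s)\ketbra uu-s\ketbra yy$ acting on $V$. I would then fix the orthonormal basis $\{\ket u,\ket{u^\perp}\}$, use $\ket y=\Delta\ket u+\sqrt{1-\Delta^2}\,\ket{u^\perp}$ (immediate from the definition of $\ket{u^\perp}$), and write the two rank-one projectors as explicit $2\times2$ matrices. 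Collecting terms gives, in this basis,
\[
H_{\mathrm G}(s)\big|_V=\tfrac12 I_2+\bigl(s(1-\Delta^2)-\tfrac12\bigr)\sigma_z-s\Delta\sqrt{1-\Delta^2}\,\sigma_x ,
\]
with $\sigma_x,\sigma_z$ the Pauli matrices relative to $\{\ket u,\ket{u^\perp}\}$.

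\emph{Step 2 (rotate to the right basis).} The key observation is that all of the $s$-dependence is carried by one traceless involution: rewrite the expression above as $\tfrac12 I_2-\tfrac12\sigma_z+s\sqrt{1-\Delta^2}\,\widehat W$ with $\widehat W:=\sqrt{1-\Delta^2}\,\sigma_z-\Delta\,\sigma_x$, for which $\widehat W^{2}=I_2$ and $\mathrm{tr}\,\widehat W=0$. I would take $\ket{\bar 0},\ket{\bar 1}$ to be the $(+1,-1)$-eigenvectors of $\widehat W$ — solving this $2\times2$ eigenproblem returns the vectors written in the statement (up to the sign convention mentioned below) — set $\bar\sigma_z:=\widehat W$, and let $\bar\sigma_x:=\Delta\,\sigma_z+\sqrt{1-\Delta^2}\,\sigma_x$ be the Pauli operator perpendicular to $\bar\sigma_z$ in the Bloch plane. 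The pair $(\bar\sigma_z,\bar\sigma_x)$ is $(\sigma_z,\sigma_x)$ rotated through the angle with cosine $\sqrt{1-\Delta^2}$ and sine $\Delta$; inverting this orthogonal relation gives $\sigma_z=\sqrt{1-\Delta^2}\,\bar\sigma_z+\Delta\,\bar\sigma_x$. Substituting into Step 1,
\[
H_{\mathrm G}(s)\big|_V=\tfrac12 I_2-\tfrac12\bigl(\sqrt{1-\Delta^2}\,\bar\sigma_z+\Delta\,\bar\sigma_x\bigr)+s\sqrt{1-\Delta^2}\,\bar\sigma_z=\tfrac{I_2}{2}+\bigl(s-\tfrac12\bigr)\sqrt{1-\Delta^2}\,\bar\sigma_z-\tfrac\Delta2\bar\sigma_x ,
\]
which is the claim.

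\emph{Main obstacle.} There is no deep difficulty here; the whole content is the observation that makes Step 2 legitimate — that one \emph{fixed} rotation suffices, because the $s$-derivative of $H_{\mathrm G}(s)\big|_V$ is the constant operator $\sqrt{1-\Delta^2}\,\widehat W$, so the Bloch vectors sweep a straight line and one only has to rotate that line onto the $\bar\sigma_z$-axis, whereupon the $s$-independent remainder automatically supplies the $-\tfrac\Delta2\bar\sigma_x$ term (and the $\tfrac12$ on the diagonal). The genuine bookkeeping hazard is the choice of sign: the orientation of $\ket{u^\perp}$ — equivalently, which of $\pm\sqrt{1-\Delta^2}$ equals $\braket{u^\perp|y}$ — has to be taken consistently with the coefficients displayed for $\ket{\bar 0},\ket{\bar 1}$ (and with the orientation assigned to $\bar\sigma_x$), or the off-diagonal term comes out with the opposite sign. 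As a cross-check, one may bypass the geometry entirely and verify the four entries $\bra{\bar i}H_{\mathrm G}(s)\big|_V\ket{\bar j}$ directly against the right-hand side; this reduces to $\abs{\braket{u|\bar 0}}^2=\tfrac12\bigl(1+\sqrt{1-\Delta^2}\bigr)$ (immediate, since $\braket{u|u^\perp}=0$), $\abs{\braket{y|\bar 0}}^2=\tfrac12\bigl(1-\sqrt{1-\Delta^2}\bigr)$ (expand $\ket{\bar 0}$, use $\braket{u|y}=\Delta$ and $\braket{u^\perp|y}=\pm\sqrt{1-\Delta^2}$, and $\Delta^2+(1-\Delta^2)=1$), and the analogous cross-term identity.
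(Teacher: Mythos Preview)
Your argument is correct and takes a different route from the paper. The supplement simply verifies the four matrix elements $\bra{\bar i}H_{\mathrm G}(s)\ket{\bar j}$ by brute force, after recording the intermediate values $\bra u H_{\mathrm G}\ket u$, $\bra{u^\perp}H_{\mathrm G}\ket u$, $\bra{u^\perp}H_{\mathrm G}\ket{u^\perp}$. You instead write $H_{\mathrm G}(s)\big|_V$ in the $\{\ket u,\ket{u^\perp}\}$ basis first, observe that the entire $s$-dependence rides on one fixed traceless involution $\widehat W$, and take $\{\ket{\bar 0},\ket{\bar 1}\}$ to be its eigenbasis; the residual $-\tfrac12\sigma_z$ then splits automatically into the claimed $\bar\sigma_z$ and $\bar\sigma_x$ pieces. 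Your approach \emph{derives} the basis and explains why an $s$-independent rotation suffices, whereas the paper's is a shorter pure verification with no structural content; the cross-check you sketch at the end is essentially the paper's own argument. Your caution about signs is also warranted: with the stated $\ket{u^\perp}$ one computes $\bra{u^\perp}H_{\mathrm G}(s)\ket u=-s\Delta\sqrt{1-\Delta^{2}}$, so the $+1$-eigenvector of your $\widehat W$ is $\sqrt{(1+\sqrt{1-\Delta^{2}})/2}\,\ket u-\sqrt{(1-\sqrt{1-\Delta^{2}})/2}\,\ket{u^\perp}$, with the opposite sign on the second component from the displayed $\ket{\bar 0}$. The paper's helper equation in the supplement carries the other sign convention, which is how its direct computation closes; this is exactly the orientation ambiguity you anticipated and does not affect the conclusion.
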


Algorithm $\mathcal A$ is an immediate corollary of Claim \ref{clm:equiv}.
The Hamiltonian $H_\mathrm{G}$ with a control function $s(t)=(1 -A\cos(\omega t))/2$ acts on $V$ as an LZS Hamiltonian on the states $\ket{\bar 0},\ket{\bar 1}$. Since $\ket {\bar 0}$ and $\ket{\bar 1}$ are exponentially close to $\ket{u}$ and $\ket{y}$ respectively,   evolving $\ket{u}$ by $H_\mathrm{G}(s(t))$ will cause the system to oscillate between the states close to $\ket{u}$ and $\ket{y}$ with frequency $\Omega=\Delta \abs{J_0(\sqrt{1-\Delta^2}A/\omega)}$. Hence, such a driven Hamiltonian can solve the Grover problem in time $O(\sqrt{2^n})$ - the same complexity as the optimized circuit and adiabatic models.

A careful analysis of LZS interferometry shows that the algorithm finds $y$ for a wide range of  $A,\omega$. We require only $\omega \gg \Delta$ for the rotating wave approximation to hold. $J_0(\sqrt{1-\Delta^2}A/\omega)$ is a factor of the algorithm's run-time, hence $A/\omega$ should not be large (for $z\gg 1$, $J_0(z) \approx 1/ \sqrt{z}$ ), and not too close to the roots of $J_0$ as it will cause $\Omega$ to diminish by CDT. Note that none of these constraints requires a prior knowledge of the gap $\Delta$, other than an upper bound, hence the algorithm is robust to an  multiplicative error of the Hamiltonian due to calibration errors. 

The limit $A=0$ yields maximal $\Omega$, and corresponds to evolving by the time-independent Hamiltonian $H_G(s)\big|_{s=1/2}=\frac{1}{2}(I_2-\Delta\bar \sigma_x)$, which we denote $H_{\nicefrac{1}{2}}$. This Hamiltonian is the core of algorithms for the search problem: evolving by $H_{\nicefrac{1}{2}}$ would slowly rotate the system  to a state close to $\ket{y}$ \cite{Oshima01}.  Similarly, in the adiabatic algorithm \cite{RC02} the Hamiltonian spends most of the time close to the $H_{\nicefrac{1}{2}}$, where the gap is minimal, while the original gate model algorithm by Grover \cite{Grover96} can be seen as a simulation (or an approximation by Trotter formula \cite{NC00}) of the same Hamiltonian). 

We now discuss adding an additional modulation to algorithm $\mathcal A$ to improve its robustness while maintaining performance. We define algorithm $\mathcal B$ \footnote{See Supplemental Material at [URL will be inserted by publisher] for the spectrum}:
\begin{equation}
\begin{split}
H_{\mathcal{B}}(t)&=(I_N-\ket{u}\bra{u})\cdot \frac{1+A\cos(\omega t)}{2} \\
&+(I_N-\ket{y}\bra{y})\cdot\frac{1-A\cos(\omega t)}{2} - B\cos(\omega t) \ketbra{u}{u}
\\
H_{\mathcal{B}}\bigg|_V&=
\begin{bmatrix}
\frac{1}{2}-(B+\frac{A}{2})\cos(\omega t) &  -\frac{\Delta}{2} (B\cos(\omega t)+1)
\\
-\frac{\Delta}{2} (B\cos(\omega t)+1) & \frac{1}{2}+ \frac{A}{2}\cos(\omega t) 
\end{bmatrix}
\\
&+O(\Delta ^2).
\end{split}
\end{equation}
A natural question is whether Algorithm $\mathcal B$ is ``cheating'' by resources or by artificially increasing the gap. We use the opportunity for a small  discussion about resources. First, note that implementing $\ketbra{u}{u}$ requires no prior knowledge of $y$, namely the algorithm is the same for all $y$ (or $y$ is ``unknown''). This means that the total time duration $H_p$ is active would have to be  at least $2^{n/2}$ - otherwise it would contradict the optimality of Grover's algorithm \cite{BV97}. To understand the role of $B$, one can partition $H_{\mathcal B}$ by the Trotter approximation to slices of time independent Hamiltonians, where evolution by $H_p$ and by terms that are not $H_p$ alternate. In this picture increasing $\abs B$ corresponds to using a stronger quantum computer between calls to the black box, but has no effect on the query complexity of the problem (the total time $H_p$ is active).

In what follows, we compare the robustness (to control errors) of algorithm $\mathcal B$ versus applying a time-independent Hamiltonian $H_{\nicefrac{1}{2}}$, which corresponds to the standard gate model and adiabatic algorithms.

\begin{figure} 

    \centering

\ifcolor
\includegraphics [scale=0.625, trim=2.35cm 12.5cm 1cm 6.8cm,clip=true] {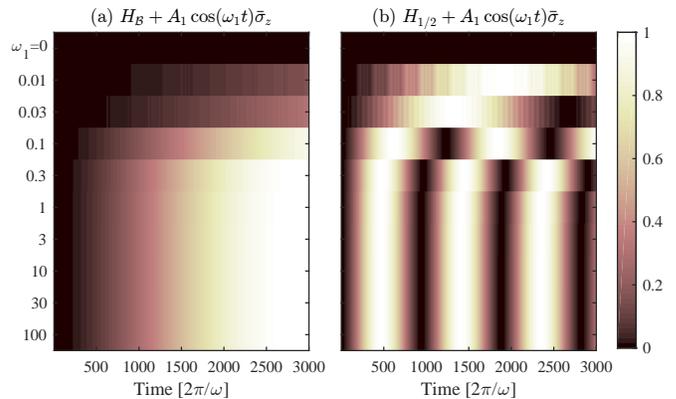}
\else
\includegraphics [scale=0.625, trim=2.35cm 12.5cm 1cm 6.8cm,clip=true] {NoiseInV2gray}
\fi
\caption{\label{fig:NoiseInV}{\ifcolor\else(Color online) \fi}16-qubit numerical simulation comparing the robustness of Algorithm $\mathcal B$ versus an evolution by $H_{\nicefrac{1}{2}}$ . Panel \textbf{(a)} correspond to Algorithm $\mathcal B$ with parameters $\omega=3.67,A=1,B=9.12$, and panel \textbf{(b)} corresponds to evolving by $H_{\nicefrac{1}{2}}$. The error $A_1 \cos (\omega_1 t) \bar \sigma_z$  with $A_1=0.05$ is equivalent to an error in $s(t)$. Each row in a panel is a simulation with different $\omega_1$ which is displayed on the y-axis. The brightness of the row changes from left to right as the value of $P_+$ varies in time under the noise of the specified $\omega_1$. Both algorithms are influenced by errors with $\omega_1\approx\Delta=0.125$, and fail as $\omega_1$ diminishes. However both are  generally robust to high frequency errors.} 

\end{figure} 

\noindent \emph{Hamiltonian control errors} are uncontrolled terms  causing the system to deviate unitarily from the intended evolution. The first error we focus on is in the form $A_1\cos(\omega_1 t+\varphi)\bar \sigma _z$ which preserves the subspace $V$ and represents an error in $s(t)$ (see Equation \ref{eq:HinV}).

Consider $H_{\nicefrac{1}{2}}$ with a harmonic control error in $s(t)$:
\begin{equation}
\widetilde H_{\nicefrac{1}{2}}=\frac{I_2}{2}-\frac{\Delta}{2}\bar \sigma_x + A_1\cos(\omega_1 t + \varphi) \bar \sigma_z.
\end{equation}
This is exactly the LZS Hamiltonian, therefore for high frequency errors ($\omega_1 \gg \Delta$) the Rabi frequency is $\widetilde \Omega=\Delta\abs{ J_0\prnt{\frac{A_1}{\omega_1}}}$, and the evolution is generally unaffected. On the other hand for $\omega_1=0$, even $A_1\approx \Delta$ may cause the system freezes in the initial state because the $\bar \sigma_z$ rotation  may become  more dominate than the desired $\bar \sigma_x$ rotation. Hence algorithms based on $H_{\nicefrac{1}{2}}$  are not robust to low frequency control errors. 

Algorithm $\mathcal {B}$ generally shows similar robustness (see Figure \ref{fig:NoiseInV}).  It fails to find $y$ when $\omega_1= 0$ and $A_1\approx \Delta$ for the same reasons $H_{\nicefrac{1}{2}}$ fails. For high frequency errors we write the Hamiltonian $H_{\mathcal{B}}+A_1\cos(\omega_1 t+\varphi )\bar \sigma_z$  in the appropriate rotating frame (around $\bar \sigma_z$)\footnote{See Supplementary Material [URL will be inserted by publisher]}: 
\begin{equation} \label{eq:AlgBNoiseInV}
\begin{split}
&\widetilde H'_{\mathcal{B}}\bigg|_V=
\begin{bmatrix}
0 &  -\frac{\Delta}{2} (B\cos(\omega t)+1)\chi
\\
~
\\
-\frac{\Delta}{2} (B\cos(\omega t)+1)\chi^* & 0
\end{bmatrix}
\\
&+O(\Delta ^2)
\\
&\chi = \sum_{k,k_1=-\infty}^\infty J_k\prnt{\frac{A+B}{\omega}} J_{k_1} \prnt{\frac{2A_1}{\omega_1}} e^{ik_1(\omega_1 t+\varphi) -ik\omega t}.
\end{split}
\end{equation}
 The algorithm is generally unaffected by high frequency errors ($\omega_1 \gg \Delta$) where all terms except $k=k_1=0$ average out, and the Rabi oscillation is $\widetilde \Omega=J_0\prnt{\frac{A+B}{\omega}}J_0\prnt{\frac{2A_1}{\omega_1}}$. Note that if for some $k,k_1$, $k_1\omega_1 \approx k\omega$, these terms would not average out may in principle cause the algorithm to fail because of CDT.

The second errors we consider in our comparison are errors that do not preserve  $V$. For their analysis, we use a three-level system toy model composed of the previously defined states $\ket{\bar 0},\ket{\bar 1}$ and an additional state $\ket{\bar{2}}$ which represents a state outside of $V$. The error term  we choose to focus on is the term $\eta(\ketbra{\bar 0}{\bar 2} +\ketbra{\bar 2}{\bar 0})$. The Hamiltonians take the form: 
\begin{equation} \label{eq:3ls}
\begin{split}
H_{\nicefrac{1}{2}}&=
\begin{bmatrix}
\frac{1}{2} &  -\frac{\Delta}{2}  & \eta 
\\
-\frac{\Delta}{2}  & \frac{1}{2} & 0 
\\
\eta & 0 & 1
\end{bmatrix}
\\
H_{\mathcal{B}}&=
\begin{bmatrix}
\frac{1}{2}-(B+\frac{A}{2})\cos(\omega t) &  -\frac{\Delta}{2} (B\cos(\omega t)+1) & \eta 
\\
-\frac{\Delta}{2} (B\cos(\omega t)+1) & \frac{1}{2}+ \frac{A}{2}\cos(\omega t) & 0 
\\
\eta & 0 & 1
\end{bmatrix}
\\
&+O(\Delta^2).
\end{split}
\end{equation}
Interestingly, $H_{\nicefrac{1}{2}}$ already have some inherent robustness to errors diverting the system to $\ket{\bar 2}$: the diagonal elements of $H_{\nicefrac{1}{2}}$ in equation \ref{eq:3ls} can be seen as ``potential energies'' of three sites. Therefore a particle in $\ket {\bar 0}$ needs to overcome a potential difference to reach $\ket{\bar 2}$, while it does not need to face a barrier when transitioning to $\ket{\bar 1}$.

Algorithm $\mathcal B$ improves the natural error suppression by adding CDT between the states $\ket{\bar 0}$ and $\ket{\bar 2}$, while allowing transitions between $\ket{\bar 0}$ and $\ket{\bar 1}$. We give here a simplified analysis using the rotating wave approximation, however we stress that finer tools such as Floquet theory \cite{Shirley65} better describe the dynamics of the system, and should be used when one attempts to find optimal values for $B,A,\omega$  (see Figure \ref{fig:3LS}). After changing to a rotating frame where $\bra{\bar 0}H_{\mathcal{B}}\ket{\bar 0}=\bra{\bar 1}H_{\mathcal{B}}\ket{\bar 1} =0$, and using the rotating wave approximation, we have:
\begin{equation} \label{eq:AlgB3LSNoise}
 H'_{\mathcal B} = \begin{bmatrix}
 0 &  -\frac{\Delta}{2} J_0\prnt{\frac{B+A}{\omega}} & \eta J_0\prnt{\frac{B+A/2}{\omega}}
\\
-\frac{\Delta}{2} J_0\prnt{\frac{B+A}{\omega}} & 0 & 0 
\\
\eta J_0\prnt{\frac{B+A/2}{\omega}} & 0 & \frac{1}{2}
\end{bmatrix}.
\end{equation}
By choosing $B,A, \omega$ s.t. $\frac{B+A/2}{\omega}$ is a root of $J_0$, the transition from $\ket{\bar{0}}$ to $\ket{\bar{2}}$ is suppressed. On the other hand the transition from $\ket{\bar 0}$ to $ \ket{\bar 1}$, which dominates $\Omega$ and the computation time, is only reduced by a factor of   $J_0\prnt{\frac{B+A}{\omega}}$. Figure \ref{fig:3LS} illustrate a scenario where Algorithm $\mathcal B$ is robust to a control error that ruins algorithms based on $H_{\nicefrac{1}{2}}$.

\begin{figure} 
\ifcolor
\includegraphics[scale=0.75, trim=4.2cm 8.7cm 6cm 9.5cm,clip=true]{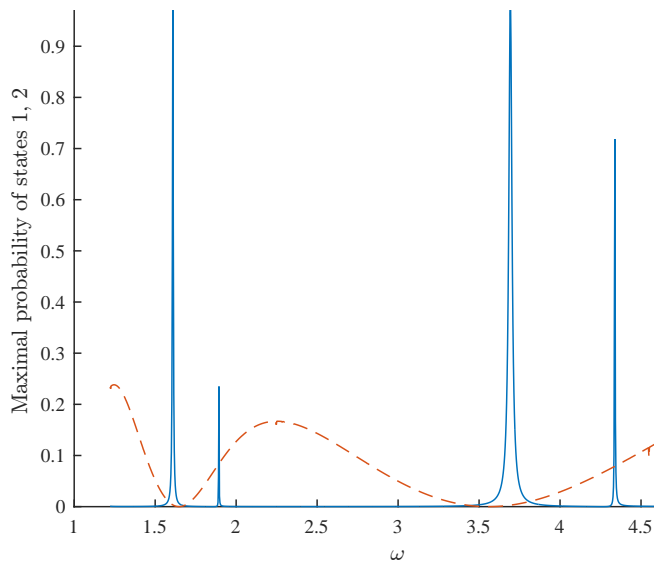}
\else
\includegraphics[scale=0.75, trim=4.2cm 8.7cm 6cm 9.5cm,clip=true]
{NoiseOutOfVdashedGray}
\fi
\caption{\label{fig:3LS}{\ifcolor\else(Color online)  \fi}A simulation of Algorithm $\mathcal B$ with control errors which do not preserve $V$. We set $n=20,A=1,B=9.12, \eta=0.3$, and simulated the three level system with different values of $\omega$ ($x$ axis). For every simulation,  two data points were plotted for the maximal probability reached by the states $\bar 1$ (blue, solid) and $\bar 2$ (orange, dashed) in the time interval $t=[0,150/\Delta]$. The ratio between the desired transition $\Delta/2\approx 5\cdot 10^{-4}$ and the control error $\eta$ is 1:600, and for algorithms based on $H_{\nicefrac{1}{2}}$ the maximal probability reached by the state $\bar 1$ is neglectable. The graph shows that for some $\omega$, the peak probability of  ${\bar 1}$ is close to one, hence Algorithm $\mathcal B$ is more robust to such errors. Note that equation \ref{eq:AlgB3LSNoise} predicts that  the transition $\bar 0 \rightarrow \bar 1$ peaks for $\omega\approx 1.74, 4$ which corresponds for the first two roots of $J_0\prnt{\frac{B+A/2}{\omega}}$, where $\bar 0\rightarrow\bar 2$ transition is strongly suppressed. The simulation shows that the transition  $\bar 0\rightarrow \bar 1$ peaks at \emph{two} frequencies around each root - this implies that the rotating wave approximation is insufficient to describe the dynamics of the system.}
\end{figure}

\emph{Thermal noise:} 
Implementing error correction for quantum algorithms based on continuous Hamiltonians is an open problem \cite{YSB13}. One can \emph{suppress} thermal noise (as well as control errors) by encoding the Hamiltonian by a stabilizer code \cite{Gottesman97}, combined with dynamical decoupling \cite{Lidar08}, energy gap protection \cite{JFS06}, or Zeno effect suppression \cite{PRDL12}; all of them function very similarly \cite{FLP04, YSB13}, providing enhanced performance for finite size systems, which were recently described in noisy intermediate scale quantum (NISQ)\cite{preskill18}. For exponential time algorithm such as the unstructured search problem, ultimately a logical error correction needs to be added.

\section{Discussion and conclusion}

In this Letter, we propose a new diabatic algorithm for solving the Grover problem using LZS interferometry. While the Grover problem is important on its own, it is interesting to examine the applicability of our paradigm to additional problems. It remains an open question whether one can translate any adiabatic algorithm to a diabatic algorithm. 

Diabaticity allowed us to suppress uncontrolled Hamiltonian terms using a mechanism inspired by coherent destruction of tunneling. We conjecture  the need for hybrid algorithms (diabatic/adiabatic), tailored to the noise parameters of a system.  

Finally it is interesting to find an expression for the optimal driving frequencies in Algorithm $\mathcal B$, their spectral width, and effectiveness.

\paragraph{Acknowledgments:}
\begin{acknowledgements}
The authors thank Michael Ben-Or, Dorit Aharonov, Tuvia Gefen, and Alex Retzker for the useful discussions. YA's work is supported by ERC grant number 280157,   and Simons foundation grant number 385590. YO's work is supported by ERC grant number 280157, and by ISF grant 1721/17. NK is supported by the ERC Project No. 335933.
\end{acknowledgements}


%

%

%
%
%

\newpage
\onecolumngrid
\section{Supplemental Material}

\subsection{Invariant  subspace in $H_G(s)$}

\begin{myclaim}
The subspace $V=\mathrm{span}\{\ket{y},\ket{u}\}$ is invariant to
\begin{equation} \label{eq:apndxHg} H_{\mathrm{G}}\left(s(t)\right)=(1-s(t))\cdot (I_N-\ket{u}\bra{u})+s(t)\cdot (I_N - \ketbra{y}{y}).
\end{equation}
\end{myclaim}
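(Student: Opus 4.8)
The plan is to rewrite $H_{\mathrm{G}}(s)$ so that the two black-box projectors are manifest, and then simply check the action on a spanning set of $V$. Expanding the definition,
\begin{equation}
H_{\mathrm{G}}(s(t)) = I_N - (1-s(t))\,\ketbra{u}{u} - s(t)\,\ketbra{y}{y},
\end{equation}
since $(1-s)+s=1$ collapses the two identity contributions into a single $I_N$. This is the only algebraic manipulation needed.

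Next I would argue invariance directly. A subspace is invariant under an operator iff the operator maps a basis of the subspace back into the subspace, so it suffices to evaluate $H_{\mathrm{G}}(s)$ on $\ket{u}$ and on $\ket{y}$. For $\ket{u}$ we get $H_{\mathrm{G}}(s)\ket{u} = \ket{u} - (1-s)\ket{u} - s\braket{y|u}\ket{y} = s\ket{u} - s\Delta\ket{y}$, which lies in $V$; similarly $H_{\mathrm{G}}(s)\ket{y} = \ket{y} - (1-s)\braket{u|y}\ket{u} - s\ket{y} = (1-s)\ket{y} - (1-s)\Delta\ket{u} \in V$. By linearity every vector of $V$ is mapped into $V$, which is the claim. (Equivalently, one can note that each term $I_N$, $\ketbra{u}{u}$, $\ketbra{y}{y}$ individually preserves $V$: the identity trivially, and a rank-one projector $\ketbra{\psi}{\psi}$ with $\ket{\psi}\in V$ sends any $\ket{v}$ to $\braket{\psi|v}\ket{\psi}\in V$.)

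There is essentially no obstacle here; the statement is a one-line linear-algebra fact once $H_{\mathrm{G}}(s)$ is written in projector form. The only point worth a sentence of care is that $\ket{u}$ and $\ket{y}$ are not orthogonal, so $V$ is genuinely two-dimensional precisely when $\Delta=\braket{y|u}\neq 1$ (true for $n\ge 1$); this non-orthogonality is exactly what later forces the change of basis to $\ket{\bar 0},\ket{\bar 1}$ in Claim~\ref{clm:equiv}, but it plays no role in the invariance argument itself. I would close by remarking that the same computation shows $V$ is invariant for \emph{every} value of $s$, hence for the time-dependent control $s(t)$ and also for the modulated Hamiltonian $H_{\mathcal B}(t)$, since the extra term $B\cos(\omega t)\ketbra{u}{u}$ is again a projector onto a vector of $V$.
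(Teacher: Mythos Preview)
Your proof is correct and follows essentially the same route as the paper: compute $H_{\mathrm G}(s)$ on the spanning pair $\ket{u},\ket{y}$, observe the results are linear combinations of $\ket{u}$ and $\ket{y}$, and conclude by linearity. Your preliminary rewriting $H_{\mathrm G}(s)=I_N-(1-s)\ketbra{u}{u}-s\ketbra{y}{y}$ and the parenthetical remark that each rank-one projector individually preserves $V$ are minor embellishments rather than a different strategy.
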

\begin{proof}
We that $H_{\mathrm G}(s(t))$ acting on any vector in $V$ keeps it in $V$ for all $s$. First, 
\begin{equation}
H_{\mathrm{G}}\ket{u} = s(t) \prnt{\ket{u} - \braket{u|y}\ket{y}} \in V
\end{equation}
\begin{equation}
H_{\mathrm{G}}\ket{y} = (1-s(t)) \prnt{\ket{y} - \braket{y|u}\ket{u}} \in V.
\end{equation}
A general vector in $V$ takes the form $\ket{v}=\prnt{\alpha \ket{u}+\beta\ket{y}} \in V$, and one can see that $H_{\mathrm G}(s(t)) \ket{v} \in V$ for any choice of $s,\alpha,\beta$. 
\end{proof}
This invariance allows us to reduce an $2^n$-dimensional problem to a 2-dimensional problem as required for the similarity relation in Claim \ref{clm:equiv}. Additionally this enables numerical simulations for high values of $n$.

\subsection{Proof of Claim \ref{clm:equiv}\label{sec:Equivalence}}
\setcounter{myclaim}{0}
Here we show the similarity of $H_{\mathrm G}$ in the subspace $V$ to the LZS Hamiltonian. It is clear that $\ket{u},\ket{y}$ are not orthogonal and therefore they cannot be mapped to $\ket{0},\ket{1}$ in $H_{\mathrm {LZS}}$. To overcome the problem we found a basis that is exponentially close to $\ket{u},\ket{y}$, which allows stating the similarity relation. Note that the rate $s(t)$ is also slightly adjusted. 

\begin{myclaim} 
The projection of $H_{\mathrm G}(s(t))$ on $V$ satisfies:

\begin{equation}
H_{\mathrm G}(s(t)) \bigg|_V = \left(\frac{I_2}{2} + \left(s(t)-\frac{1}{2}\right) \sqrt{1-\Delta^2} \bar\sigma_z -\frac{\Delta}{2}\bar\sigma_x \right)
\end{equation}
where $\Delta=\braket{y|u}$. The operators $\bar \sigma_x ,\bar \sigma_z$ act on the states
\begin{equation}
\begin{split}
\ket{\bar 0}&=\sqrt{\frac{1+\sqrt{1-\Delta^{2}}}{2}} \ket{u} + \sqrt{\frac{1-\sqrt{1-\Delta^{2}}}{2}}\ket{u^\perp} \\
\ket{\bar 1}&=\sqrt{\frac{1-\sqrt{1-\Delta^{2}}}{2}} \ket{u} - \sqrt{\frac{1+\sqrt{1-\Delta^{2}}}{2}}\ket{u^\perp}
\end{split}
\end{equation}
where  ${\ket{u^\perp}\mathrel{\mathop:}=
 \frac{\ket{y}-\Delta\ket{u}} {\sqrt{1-\Delta^2}}}$ is the vector orthogonal to $\ket{u}$ in $V$.
\end{myclaim}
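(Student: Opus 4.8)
The plan is to collapse the statement into a short $2\times2$ computation in the Pauli algebra. By the invariance property just established, $H_{\mathrm G}(s(t))$ leaves $V=\mathrm{span}\{\ket u,\ket y\}$ invariant, so it suffices to compute the matrix of the restriction in a convenient orthonormal basis of $V$. First I would fix global phases so that $\Delta=\braket{y|u}$ is real with $0<\Delta<1$ (the cases $\Delta=0,1$ are degenerate) and work in the orthonormal basis $\{\ket u,\ket{u^\perp}\}$ with $\ket{u^\perp}=\frac{\ket y-\Delta\ket u}{\sqrt{1-\Delta^2}}$, so that $\ket y=\Delta\ket u+\mu\ket{u^\perp}$ with $\mu:=\sqrt{1-\Delta^2}$. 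Letting $\sigma_z,\sigma_x$ be the Pauli operators of this basis, one reads off $\ketbra uu=\frac12(I_2+\sigma_z)$ and, from $\ket y=\Delta\ket u+\mu\ket{u^\perp}$, $\ketbra yy=\frac12 I_2+\prnt{\Delta^2-\frac12}\sigma_z+\Delta\mu\,\sigma_x$.

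Next I would substitute these into the symmetrised form $H_{\mathrm G}(s)=I_N-\frac12\prnt{\ketbra uu+\ketbra yy}+\prnt{s-\frac12}\prnt{\ketbra uu-\ketbra yy}$ restricted to $V$. Using only $\Delta^2+\mu^2=1$, the $s$-independent piece collapses to $\frac12 I_2-\frac{\Delta^2}2\sigma_z-\frac{\Delta\mu}2\sigma_x=\frac12 I_2-\frac\Delta2\prnt{\Delta\sigma_z+\mu\sigma_x}$, while $\ketbra uu-\ketbra yy=\mu\prnt{\mu\sigma_z-\Delta\sigma_x}$, so the $s$-dependent piece is $\prnt{s-\frac12}\mu\prnt{\mu\sigma_z-\Delta\sigma_x}$. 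Collecting, $H_{\mathrm G}(s(t))\big|_V=\frac{I_2}{2}+\prnt{s(t)-\frac12}\mu\prnt{\mu\sigma_z-\Delta\sigma_x}-\frac\Delta2\prnt{\Delta\sigma_z+\mu\sigma_x}$.

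To finish, note that $(-\Delta,0,\mu)$ and $(\mu,0,\Delta)$ are orthonormal in $\mathbb R^3$, so $\bar\sigma_z:=\mu\sigma_z-\Delta\sigma_x$ and $\bar\sigma_x:=\Delta\sigma_z+\mu\sigma_x$ each square to $I_2$, anticommute, and hence form a legitimate Pauli pair — the image of $\sigma_z,\sigma_x$ under a Bloch-sphere rotation. In terms of them the line above is literally $\frac{I_2}2+\prnt{s(t)-\frac12}\sqrt{1-\Delta^2}\,\bar\sigma_z-\frac\Delta2\bar\sigma_x$, the asserted Hamiltonian. It then remains to identify the basis on which $\bar\sigma_z,\bar\sigma_x$ act: diagonalising $\bar\sigma_z=\mu\sigma_z-\Delta\sigma_x$ gives the $+1$ eigenvector $\sqrt{\frac{1+\mu}2}\ket u-\sqrt{\frac{1-\mu}2}\ket{u^\perp}$ and the $-1$ eigenvector $\sqrt{\frac{1-\mu}2}\ket u+\sqrt{\frac{1+\mu}2}\ket{u^\perp}$ (the normalisation using $\Delta^2=(1-\mu)(1+\mu)$), which are the stated $\ket{\bar0},\ket{\bar1}$ up to the sign freedom in the choice of $\ket{u^\perp}$; one then checks directly that $\ketbra{\bar0}{\bar1}+\ketbra{\bar1}{\bar0}$ equals $\bar\sigma_x$, so the two operators act as claimed.

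I expect no genuine obstacle here: the proof is entirely the computation above, and its only nontrivial ingredient is the identity $\Delta^2+(1-\Delta^2)=1$, which is exactly what allows the $s$-dependent operator $\ketbra uu-\ketbra yy$ and the $s$-independent operator $\ketbra uu+\ketbra yy$ to be expressed in one common rotated frame $\{\bar\sigma_z,\bar\sigma_x\}$. The one point that needs care is bookkeeping of signs and of the $\ket{\bar0}\leftrightarrow\ket{\bar1}$ labeling: I would fix the sign of $\ket{u^\perp}$ and the orientation of $\bar\sigma_z$ once at the start so that the final matrix reproduces the stated formula verbatim, and would verify the endpoints $s=0$ (where $H_{\mathrm G}\big|_V=\ketbra{u^\perp}{u^\perp}$) and $s=1$ (where $H_{\mathrm G}\big|_V=I_2-\ketbra yy$) as a consistency check.
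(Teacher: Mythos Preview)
Your argument is correct and takes a genuinely different route from the paper's. The paper's proof fixes the basis $\{\ket{\bar0},\ket{\bar1}\}$ given in the statement and then verifies the three matrix elements $\bra{\bar i}H_{\mathrm G}\ket{\bar j}$ by brute-force algebra, after first tabulating $\bra{u}H_{\mathrm G}\ket{u}$, $\bra{u^\perp}H_{\mathrm G}\ket{u}$, $\bra{u^\perp}H_{\mathrm G}\ket{u^\perp}$. You instead stay in the natural orthonormal basis $\{\ket u,\ket{u^\perp}\}$, expand $\ketbra uu$ and $\ketbra yy$ in Pauli operators, and observe that the $s$-dependent piece $\ketbra uu-\ketbra yy$ and the $s$-independent piece $\ketbra uu+\ketbra yy-I_2$ point along the two orthogonal Bloch directions $(-\Delta,0,\mu)$ and $(\mu,0,\Delta)$. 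This immediately identifies $\bar\sigma_z,\bar\sigma_x$ as a rigid rotation of $\sigma_z,\sigma_x$ and explains \emph{why} the restriction has the LZS form, rather than merely confirming it entry by entry. Your route is shorter and more conceptual; the paper's has the complementary virtue of requiring no geometric insight, only patience.

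Your caveat about the sign of $\ket{u^\perp}$ is well taken and not cosmetic. With $\ket{u^\perp}=(\ket y-\Delta\ket u)/\sqrt{1-\Delta^2}$ as stated, one has $\bra{u^\perp}H_{\mathrm G}\ket{u}=-s\Delta\mu$, whereas the paper's intermediate table carries the opposite sign; correspondingly, the $+1$ eigenvector of $\bar\sigma_z=\mu\sigma_z-\Delta\sigma_x$ is $\sqrt{\tfrac{1+\mu}{2}}\ket u-\sqrt{\tfrac{1-\mu}{2}}\ket{u^\perp}$, matching the stated $\ket{\bar0}$ only after $\ket{u^\perp}\to-\ket{u^\perp}$. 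Your plan to fix the orientation once at the outset and check the endpoints $s=0,1$ is exactly the right way to pin this down.
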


\begin{proof}

As defined before,
\begin{equation} H_{\mathrm{G}}\left(s(t)\right)=(1-s(t))\cdot (I_N-\ket{u}\bra{u})+s(t)\cdot (I_N - \ketbra{y}{y}).
\end{equation}

We are to prove that the matrix form of $H_\mathrm G (s(t))$ projected on $V$, in the basis $\ket{\bar 0},\ket{\bar 1}$ is: 
\begin{equation} \label{eq:ApndxHGinV}
H_\mathrm G (s(t))\bigg|_V=\prnt{\begin{array}{cc}
\frac{1}{2}+\prnt{s(t)-\frac{1}{2}}\eta & -\frac{\Delta}{2} \\ 
-\frac{\Delta}{2} & \frac{1}{2}-\prnt{s(t)-\frac{1}{2}}\eta
\end{array}}
\end{equation}
where $\xi=\sqrt{1-\Delta^2}$. In other words we are to prove that 

\begin{equation}
\begin{split}
\bra{\bar 0} H_\mathrm{G}(s(t)) \ket{\bar 0}&=\frac{1}{2}+\prnt{s(t)-\frac{1}{2}}\xi
\\
\bra{\bar 1} H_\mathrm{G}(s(t)) \ket{\bar 1}&=\frac{1}{2}-\prnt{s(t)-\frac{1}{2}}\xi
\\
\bra{\bar 0} H_\mathrm{G}(s(t)) \ket{\bar 1}&=-\frac{\Delta}{2}
\end{split} 
\end{equation}

It is helpful to use the equalities in the calculation that follows: 
\begin{equation}
\begin{split}
\bra{u}H_\mathrm{G}(s(t)) \ket{u} &=  (1-\Delta^2)s(t)
\\
\bra{u^\perp}H_\mathrm{G}(s(t)) \ket{u} &=  \Delta\xi \cdot s(t)
\\
\bra{u^\perp}H_\mathrm{G}(s(t)) \ket{u^\perp} &= 1 - \xi^2 \cdot s(t).
\end{split}
\end{equation}

\begin{equation}
\begin{split}
\bra{\bar 0} H_\mathrm{G}(s(t)) \ket{\bar 0}&={\frac{1+\xi}{2}} \cdot (1-\Delta ^2) s(t) +   2 \Delta \xi s(t) \sqrt{\frac{1+\xi}{2}\cdot \frac{1-\xi}{2}} + (1-\xi^2 s(t)){\frac{1-\xi}{2}}
\\
&=\xi^2 {\frac{1+\xi}{2}}   s(t) +    \Delta^2 \xi s(t)  + (1-\xi^2 s(t)){\frac{1-\xi}{2}} 
\\
&=
s(t)\cdot \prnt{\xi^2 {\frac{1+\xi}{2}} + (1-\xi^2)\xi - \xi^2 \frac{1-\xi}{2}} + \frac{1-\xi}{2}=\frac{1}{2}+\prnt{s(t)-\frac{1}{2}}\xi
\end{split}
\end{equation}

\begin{equation}
\begin{split}
\bra{\bar 1} H_\mathrm{G}(s(t)) \ket{\bar 1}
&=
{\frac{1-\xi}{2}} \cdot (1-\Delta ^2) s(t) -  2 \Delta \xi s(t) \sqrt{\frac{1+\xi}{2}\cdot \frac{1-\xi}{2}} + (1-\xi^2 s(t)){\frac{1+\xi}{2}}
\\
&=
\xi^2 {\frac{1-\xi}{2}}   s(t) -    \Delta^2 \xi s(t)  + (1-\xi^2 s(t)){\frac{1+\xi}{2}} 
\\
&=
s(t)\prnt{\xi^2 \frac{1-\xi}{2}-(1-\xi^2)\xi - \xi^2\frac{1+\xi}{2}}+\frac{1+\xi}{2} = \frac{1}{2}-\prnt{s(t)-\frac{1}{2}}\xi
\end{split}
\end{equation}

\begin{equation}
\begin{split}
\bra{\bar 1} H_\mathrm{G}(s(t)) \ket{\bar 0}
&=(1-\Delta^2)s(t)\sqrt{\frac{(1+\xi)(1-\xi)}{4}}+  \Delta\xi \cdot s(t) \prnt{\frac{1-\xi}{2}-\frac{1+\xi}{2}} - (1-\xi^2\cdot s(t))\sqrt{\frac{(1-\xi)(1+\xi)}{4}}
\\
&= \frac{\xi^2 \Delta}{2} s(t) - \Delta \xi^2 s(t) - (1-\xi^2 \cdot s(t))\frac{\Delta}{2}= - \frac{\Delta}{2}  
\end{split}
\end{equation}

We found all the elements of $H_{\mathrm{G}}$ in $V$, and proved equation \ref{eq:ApndxHGinV} is correct. The proof of Claim \ref{clm:equiv} follows.

\end{proof}

\subsection{Analysis of LZS oscillations using the rotating wave approximation.}

In this section we analyze the LZS oscillations and the robustness to errors by generalizing the rotating wave approximation analysis by \cite{Pegg73,AJZN07}.
\setcounter{myclaim}{2}
\begin{myclaim} \label{clm:RWA}
The Rabi frequency of  a system driven by $H_\mathrm{LZS}$ is 
\begin{equation} 
\Omega = \Delta \left|J_0 \left(\frac{A}{\omega}\right)\right|
\end{equation}
\end{myclaim}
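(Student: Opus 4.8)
The plan is to pass to the frame rotating about the $z$ axis that exactly cancels the diagonal drive $-\tfrac{A}{2}\cos(\omega t)\sigma_z$, expand the surviving transverse term in a Jacobi--Anger series, and keep only the secular ($k=0$) component, which is legitimate precisely in the regime $\omega\gg\Delta$ assumed in the claim. Concretely, introduce the unitary $U_0(t)=\exp\!\big(\tfrac{i}{2}\tfrac{A}{\omega}\sin(\omega t)\,\sigma_z\big)$, which obeys $i\dot U_0=-\tfrac{A}{2}\cos(\omega t)\,\sigma_z\,U_0$ and $U_0(0)=I$. Writing $\ket{\psi}=U_0\ket{\phi}$ turns $i\dot{\ket\psi}=H_{\mathrm{LZS}}\ket\psi$ into $i\dot{\ket\phi}=\widetilde H(t)\ket\phi$ with the diagonal part removed:
\[
\widetilde H(t)=-\frac{\Delta}{2}\,U_0^\dagger\sigma_x U_0=-\frac{\Delta}{2}\big(\cos\theta(t)\,\sigma_x+\sin\theta(t)\,\sigma_y\big),\qquad \theta(t)=\frac{A}{\omega}\sin(\omega t).
\]
Since $U_0$ is diagonal in the $\{\ket{0},\ket{1}\}$ basis, the populations of $\ket{0},\ket{1}$ coincide in the lab frame and in this rotating frame at all times, so it suffices to analyse $\widetilde H$.

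\textbf{Jacobi--Anger expansion.} Next, use $e^{iz\sin(\omega t)}=\sum_{k\in\mathbb Z}J_k(z)e^{ik\omega t}$ with $z=A/\omega$ to write $\cos\theta(t)=\sum_k J_k(A/\omega)\cos(k\omega t)$ and $\sin\theta(t)=\sum_k J_k(A/\omega)\sin(k\omega t)$. Because $J_{-k}=(-1)^kJ_k$, the $\sin\theta$ series has no constant term while the $\cos\theta$ series has constant term $J_0(A/\omega)$. Hence
\[
\widetilde H(t)=-\frac{\Delta}{2}J_0\!\Big(\frac{A}{\omega}\Big)\sigma_x+\widetilde H_{\mathrm{osc}}(t),
\]
where every term of $\widetilde H_{\mathrm{osc}}(t)$ carries a factor $\tfrac{\Delta}{2}J_k(A/\omega)$ and oscillates at a nonzero integer multiple of $\omega$.

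\textbf{RWA and the Rabi frequency.} In the regime $\omega\gg\Delta$ the oscillating remainder averages out: on the evolution timescale $t\sim1/\Omega$ its contribution to the propagator is $O(\Delta/\omega)$, as one sees from a first-order Magnus/averaging estimate (or, more carefully, from the Floquet analysis of \cite{Shirley65,AJZN07}). Discarding it leaves the time-independent effective Hamiltonian $\widetilde H_{\mathrm{eff}}=-\tfrac{\Delta}{2}J_0(A/\omega)\,\sigma_x$, whose two eigenvalues are split by $\Delta|J_0(A/\omega)|$. Thus a state started in $\ket{0}$ becomes $\cos\!\big(\tfrac{\Delta J_0}{2}t\big)\ket{0}+i\sin\!\big(\tfrac{\Delta J_0}{2}t\big)\ket{1}$ up to signs, and the populations oscillate about the $x$ axis with angular frequency $\Omega=\Delta|J_0(A/\omega)|$, as claimed.

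\textbf{Main obstacle.} Everything except the last step is bookkeeping with rotation operators and Bessel identities; the one genuine point is controlling the discarded counter-rotating terms, i.e.\ showing that dropping $\widetilde H_{\mathrm{osc}}$ perturbs the dynamics only by $O(\Delta/\omega)$ uniformly over the timescale $1/\Omega$. I would handle this either by an additional near-identity (Van Vleck) transformation pushing the error to higher order in $\Delta/\omega$, or by invoking the Floquet treatment cited above; this is also exactly where the hypothesis $\omega\gg\Delta$ enters, which is why the claim is stated only in that regime.
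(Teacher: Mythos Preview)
Your proof is correct and follows essentially the same route as the paper: pass to the $\sigma_z$-rotating frame via $U=\exp\!\big(\tfrac{i}{2}\int a(t)\,dt\,\sigma_z\big)$ so the diagonal drive cancels, expand the surviving off-diagonal via Jacobi--Anger, and invoke the RWA ($\omega\gg\Delta$) to retain only the $k=0$ term, yielding $\widetilde H_{\mathrm{eff}}=-\tfrac{\Delta}{2}J_0(A/\omega)\sigma_x$. The only cosmetic difference is that the paper writes the rotated transverse term directly as a $2\times2$ matrix with entries $e^{\pm i\int a(t)\,dt}$ rather than splitting it into $\cos\theta\,\sigma_x+\sin\theta\,\sigma_y$; your added remarks on bounding the discarded counter-rotating terms go slightly beyond what the paper spells out.
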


\begin{proof}
We start with 2-level system and a general control function:
\begin{equation}
H(t) := \frac{1}{2} \left(
-a(t)\sigma_z -\Delta \sigma_x  
\right)=\frac{1}{2}\begin{bmatrix}
	-a(t) & -\Delta \\
	-\Delta & a(t)
\end{bmatrix}.
\end{equation}
Changing to  the rotating frame yields 

\begin{equation}
\ket{\psi(t)} = U(t) \ket{\psi'(t)},
\end{equation}
where
\begin{equation}
U(t)= \exp \prnttt{\frac{i}{2} \sigma_z \int a(t) dt}
\end{equation}
Note that the populations of the ground state the excited states are invariant to this transformation. The effective Hamiltonian $H'$ which satisfies the Shr\"odinger equation in the rotating frame, i.e.,
\begin{equation}
i\frac{d}{dt}\ket{\psi'} = H'\ket{\psi'}
\end{equation}
is the following \cite{Joachain1975}:
\begin{equation} \label{eq:rotatingNoiseV}
H'(t)=U^\dagger(t) H(t) U(t) - i U^\dagger(t)\frac{dU(t)}{dt} 
=
-\frac{\Delta}{2} \begin{bmatrix}
	0 & e^{-i\int a(t)dt} \\
	e^{i\int a(t)dt} & 0
\end{bmatrix}
\end{equation}
Assigning $a(t)=A\cos(\omega t)$, integrating and using the identity  
\begin{equation} \label{eq:BesselIdentity}
\exp\prnttt{iz\sin \gamma} = \sum_{k=-\infty}^\infty J_k(z)e^{ik\gamma}
\end{equation}
we get
\begin{equation}
H'(t)=
-\frac{\Delta}{2}
\begin{bmatrix}
	0 & \exp\prnttt{-i \frac{A}{\omega} \sin(\omega t)} \\
	\exp\prnttt{i \frac{A}{\omega} \sin(\omega t)} & 0
\end{bmatrix}
=
-\frac{\Delta}{2}
\begin{bmatrix}
	0 & \sum_{k=-\infty}^\infty J_k(A/\omega)e^{-ik\omega t} \\
	\sum_{k=-\infty}^\infty J_k(A/\omega)e^{ik\omega t} & 0
\end{bmatrix}
\end{equation}
Since $\omega \gg \Delta$ we can use the rotating wave approximation
\begin{equation}
H'(t)
=
-\frac{\Delta}{2}
\begin{bmatrix}
	0 &  J_0(A/\omega) \\
	 J_0(A/\omega) & 0
\end{bmatrix}
\end{equation}
and the proof follows.

\end{proof}

\/*
\yanote{here should be analysis Algorithm A/B with noises in V. It is a variance of the following claim}
\begin{myclaim}
The LZS oscillations are robust to harmonic noise  in the control function, with frequency  $\omega_1\ll \omega$ and amplitude $A_1<\omega_1, A$.
\end{myclaim}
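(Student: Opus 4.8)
The plan is to rerun the rotating-frame argument of Claim~\ref{clm:RWA}, now with the control function carrying the harmonic perturbation, and then to invoke the Jacobi--Anger expansion a second time. Write the noisy control as $a(t)=A\cos(\omega t)+A_1\cos(\omega_1 t+\varphi)$, so the perturbed two-level generator is $\widetilde H(t)=\tfrac12\bigl(-a(t)\sigma_z-\Delta\sigma_x\bigr)$. Exactly as in the proof of Claim~\ref{clm:RWA}, I pass to the frame $\ket{\psi}=U(t)\ket{\psi'}$ with $U(t)=\exp\prnttt{\tfrac{i}{2}\sigma_z\int a(t)\,dt}$ — a transformation that leaves the ground/excited populations, and hence the very notion of the LZS oscillations, invariant — obtaining the purely off-diagonal Hamiltonian $\widetilde H'(t)=-\tfrac{\Delta}{2}\bigl[\ketbra{0}{1}\,e^{-i\int a\,dt}+\mathrm{h.c.}\bigr]$ with $\int a\,dt=\tfrac{A}{\omega}\sin(\omega t)+\tfrac{A_1}{\omega_1}\sin(\omega_1 t+\varphi)$.

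Next I would expand the phase factor with identity~(\ref{eq:BesselIdentity}) in \emph{both} harmonics,
\begin{equation}
e^{-i\int a\,dt}=\sum_{k,k_1=-\infty}^{\infty}J_k\prnt{\tfrac{A}{\omega}}J_{k_1}\prnt{\tfrac{A_1}{\omega_1}}\,e^{-ik\omega t}\,e^{-ik_1(\omega_1 t+\varphi)},
\end{equation}
and perform a first rotating-wave average. Since $\omega\gg\Delta$ and $\omega_1\ll\omega$, every term with $k\neq 0$ oscillates at frequency $|k\omega+k_1\omega_1|\gtrsim\omega/2\gg\Delta$ over the relevant range of $k_1$ (values $|k_1|\gtrsim\omega/\omega_1$ are irrelevant because $A_1/\omega_1<1$ makes $J_{k_1}$ negligible there), so these terms drop out; the surviving $k=0$ slice reassembles into
\begin{equation}
\widetilde H'(t)\;\approx\;-\frac{\Delta\,J_0(A/\omega)}{2}\Bigl[\ketbra{0}{1}\,e^{-i\frac{A_1}{\omega_1}\sin(\omega_1 t+\varphi)}+\mathrm{h.c.}\Bigr],
\end{equation}
which is \emph{again} an LZS Hamiltonian in the rotating frame, with effective gap $\Delta_{\mathrm{eff}}=\Delta\abs{J_0(A/\omega)}$ and a residual drive of amplitude $A_1$ and frequency $\omega_1$.

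It then remains to analyze this reduced LZS problem. When $\omega_1\gg\Delta_{\mathrm{eff}}$, a second application of Claim~\ref{clm:RWA} (with $\Delta\to\Delta_{\mathrm{eff}}$, $A\to A_1$, $\omega\to\omega_1$) gives the dressed Rabi frequency $\widetilde\Omega=\Delta\,\abs{J_0(A/\omega)}\,\abs{J_0(A_1/\omega_1)}$; because $A_1<\omega_1$ forces $A_1/\omega_1<1$, strictly below the first zero $j_{0,1}\approx2.405$ of $J_0$, the factor $\abs{J_0(A_1/\omega_1)}$ stays in $[J_0(1),1]\approx[0.77,1]$, so no coherent destruction of tunneling occurs and $\widetilde\Omega=\Omega\bigl(1+O((A_1/\omega_1)^2)\bigr)$ — the oscillation persists at an essentially unchanged rate. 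In the complementary window $\omega_1\lesssim\Delta_{\mathrm{eff}}$ the residual drive is quasi-static: $a_1(t)\simeq A_1\cos\varphi$ is frozen on the oscillation timescale, the dynamics is governed by the tilted static generator $\tfrac12(-A_1\cos\varphi\,\sigma_z-\Delta_{\mathrm{eff}}\sigma_x)$, and — using $A_1<\min(\omega_1,A)$ so that $A_1$ is small relative to the drive scale — the tilt angle is small, so the transfer probability and frequency are perturbed only at second order in $A_1$. Together these two cases establish robustness.

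The main obstacle I expect is the nested, two-stage rotating-wave approximation: one must verify that there is genuine scale separation ($\omega\gg\omega_1$, $\omega\gg\Delta$) with no accidental resonance $k\omega+k_1\omega_1\approx0$ for $k\neq0$ generated between the two drives, and one must handle the intermediate regime $\omega_1\sim\Delta_{\mathrm{eff}}$ where neither the ``fast'' nor the ``quasi-static'' reasoning is clean. A uniform error bound there is most naturally extracted from Floquet/Magnus estimates rather than from RWA, and controlling the resulting deviation of $\widetilde\Omega$ (and of the peak transfer probability) is the technical heart of the argument.
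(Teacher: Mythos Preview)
Your argument is essentially the paper's: pass to the rotating frame that kills the diagonal, apply the Jacobi--Anger identity~(\ref{eq:BesselIdentity}) to \emph{both} harmonics, and read off the dressed Rabi frequency $\widetilde\Omega=\Delta\,\abs{J_0(A/\omega)}\,\abs{J_0(A_1/\omega_1)}$, with $A_1/\omega_1<1$ keeping the second Bessel factor safely away from its zeros. The only difference is organizational: the paper performs a single RWA retaining just the $k=k_1=0$ term and then bounds the discarded pieces by $O\!\prnt{J_k(A/\omega)J_{k_1}(A_1/\omega_1)\cdot\Delta/(k\omega-k_1\omega_1)}$, whereas you split the averaging into two nested stages and add a separate quasi-static analysis for $\omega_1\lesssim\Delta_{\mathrm{eff}}$ --- a slightly more careful bookkeeping of the same mechanism.
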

\begin{proof}
We assume an harmonic noise:
\begin{equation}
a(t)=A\cos(\omega t) + A_1 \cos (\omega_1 t)
\end{equation}

The Hamiltonian $H'$ takes the form 

\begin{equation}
\begin{split}
H'(t)&=
-\frac{\Delta}{2}
\begin{bmatrix}
	0 & \exp\prnttt{-i \prnt{\frac{A}{\omega} \sin(\omega t)+\frac{A_1}{\omega_1} \sin(\omega_1 t)}} \\
	\exp\prnttt{+i \prnt{\frac{A}{\omega} \sin(\omega t)+\frac{A_1}{\omega_1} \sin(\omega_1 t)}} & 0
\end{bmatrix}
\\
&=
-\frac{\Delta}{2}
\begin{bmatrix}
	0 & \sum_{k,k_1=-\infty}^\infty J_k(A/\omega)J_{k_1}(A_1/\omega_1)e^{-it(k\omega +k_1\omega_1) } \\
	\sum_{k,k_1=-\infty}^\infty J_k(A/\omega)J_{k_1}(A_1/\omega_1)e^{it(k\omega +k_1\omega_1) } & 0
\end{bmatrix}
\end{split}
\end{equation}

We see that the time independent component corresponding to $k,k_1=0$  causes the system oscillates between $\ket{0}$ and $\ket{1}$ with frequency
$\Omega = {\Delta} \abs{J_0 \prnt{\frac{A}{\omega}}J_0 \prnt{\frac{A_1}{\omega_1}}}$.  Since $A_1 < \omega_1$ we are safe from $J_0$ zeros which would nullify $\Omega$, as well as the decay of $J_0(z)$ for large values of $z$.

Repeating the analysis of the proof in claim \ref{clm:RWA}, we see that $V_I$ in this case contains terms of order $O\prnt{J_k \prnt{\frac{A}{\omega}}J_{k_1}\prnt{\frac{A_1}{\omega_1}}\cdot \frac{\Delta}{k\omega-k_1\omega_1}}$. The assumption $\omega\gg \omega_1$  implies $\abs{k_1}\gg 1$, and since $J_k(z)\approx \frac{z^k}{2^k k!}$ for $z\ll 1$, these terms decay quickly. \yanote{mmm} The proof follows.

\end{proof}
\*/
%
%
%
%

\/*
\subsection{errors in $\bar \sigma_x, \bar \sigma_z$}
\yanote{for myself}

 Consider the Hamiltonian 
\begin{equation}
\begin{split}
H(t) &= \frac{1}{2} \prnt{-\Delta \sigma_x -A\cos(\omega t)\sigma_z - A_1 \cos(\omega_1 t) \vec{r}.\vec{\sigma}}
\end{split}
\end{equation} 
%

Following \cite{AJZN07} we switch to the rotating frame with the transformation $\ket{\psi(t)} = U(t) \ket{\psi'(t)}$, where
\begin{equation} 
U(t)= \exp \prnttt{\frac{i}{2} \prnt{\frac{A\sin(\omega t)}{\omega} + \frac{A_1\sin(\omega_1 t)}{\omega_1}} \sigma_z }.
\end{equation} 
The effective Hamiltonian in the rotating frame takes the form
\begin{equation}
\begin{split}
&H'(t)=U^\dagger(t) H(t) U(t) - i U^\dagger(t)\frac{dU(t)}{dt} 
\\
&=
\sigma_- e^{i\prnt{\frac{A\sin(\omega t)}{\omega} + \frac{A_1\sin(\omega_1 t)}{\omega_1}}} \prnt{-\frac{\Delta}{2} - A_1 r_{xy} \cos(\omega_1 t) e^{-i\varphi}}
\\
&+ h.c.
\end{split}
\end{equation}
where $\varphi= \mathrm{atan}(r_y/r_x)$,  $r_{xy}^2=r_x^2 + r_y^2$ and $\sigma_- = \sigma_x + i \sigma_y$.
%
By using the identity 
\begin{equation}
\exp\prnttt{iz\sin \gamma} = \sum_{k=-\infty}^\infty J_k(z)e^{ik\gamma}
\end{equation}
we get

\begin{equation} 
\begin{split}
&H'(t)=
\\
&\sum_{k,k_1=-\infty}^\infty \hspace{-10pt} J_k\prnt{\frac{A}{\omega}}J_{k_1}\prnt{\frac{A_1}{\omega_1}} {\biggl [}-\frac{\Delta}{2}\prnt{\cos(\xi t)\sigma_x +\sin(\xi t) \sigma_y }
\\
&+ A_1 r_{xy} \cos(\omega_1 t) \prnt{\cos(\xi t+\varphi)\sigma_x +\sin(\xi t+\varphi) \sigma_y } {\biggl ]}
\end{split}
\end{equation}


\yanote{what happens here? and is it too messy to conclude anything? perhaps i just want to handle sigmaz errors?}
where $\xi=k\omega+k_1\omega_1$.
*/

\subsection{Analysis of Algorithm $\mathcal{B}$ using the rotating wave approximation}

We give here a more detailed derivation of some of the rotating frame transformation of $H_{\mathcal B}$ in the main text (following  \cite{AJZN07}). In the case of $\bar \sigma_z$ error,
\begin{myclaim}
Let
\begin{equation}
\widetilde H_{\mathcal B}\bigg|_V  =
\begin{bmatrix}
\frac{1}{2}-(B+\frac{A}{2})\cos(\omega t) &  -\frac{\Delta}{2} (B\cos(\omega t)+1)
\\
-\frac{\Delta}{2} (B\cos(\omega t)+1) & \frac{1}{2}+ \frac{A}{2}\cos(\omega t) 
\end{bmatrix}+ A_1\bar \sigma_z\cos(\omega_1 t+\varphi) + O(\Delta ^2).
\end{equation}
Using a rotation around $\bar \sigma_z$ the effective Hamiltonian is as Equation \ref{eq:AlgBNoiseInV}:
\begin{equation}
\begin{split}
\widetilde H'_{\mathcal{B}}\bigg|_V&=
\begin{bmatrix}
0 &  -\frac{\Delta}{2} (B\cos(\omega t)+1)\chi
\\
~
\\
-\frac{\Delta}{2} (B\cos(\omega t)+1)\chi^* & 0
\end{bmatrix}
+O(\Delta ^2)
\\
\chi &= \sum_{k,k_1=-\infty}^\infty J_k\prnt{\frac{A+B}{\omega}} J_{k_1} \prnt{\frac{2A_1}{\omega_1}} e^{ik_1(\omega_1 t+\varphi) -ik\omega t}
\end{split}
\end{equation}
\end{myclaim}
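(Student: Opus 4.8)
The plan is to follow the rotating-frame method established in the proof of Claim~\ref{clm:RWA}, but now with a time-dependent off-diagonal coupling and an extra harmonic term on the diagonal. First I would symmetrize the diagonal: write the diagonal of $\widetilde H_{\mathcal B}|_V$ as $\frac{1}{2}I_2$ plus the traceless part $-\frac{1}{2}\left(B+\frac{A}{2}\right)\cos(\omega t)\,\bar\sigma_z - \frac{A}{4}\cos(\omega t)\,\bar\sigma_z$ — more precisely, collecting all $\bar\sigma_z$-proportional terms gives a coefficient $-\frac{1}{2}\left(B+A\right)\cos(\omega t)$ on $\bar\sigma_z$ (up to the irrelevant $\frac{1}{2}I_2$ global phase and an additive constant that I can absorb), plus the error term $A_1\cos(\omega_1 t+\varphi)\bar\sigma_z$. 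So the effective "longitudinal" drive is $a(t) = (A+B)\cos(\omega t) + 2A_1\cos(\omega_1 t+\varphi)$ after accounting for the factor-of-two convention relating $\bar\sigma_z$'s coefficient to the $a(t)$ in Claim~\ref{clm:RWA} (the $\frac12$ in $H=\frac12(-a\sigma_z-\Delta\sigma_x)$). The off-diagonal stays $-\frac{\Delta}{2}(B\cos(\omega t)+1)$, which is time-dependent but still $O(\Delta)$, so it commutes with the leading-order rotating-frame transformation in the relevant sense.

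Next I would apply the transformation $U(t) = \exp\left\{\frac{i}{2}\bar\sigma_z\int a(t)\,dt\right\}$ exactly as in Claim~\ref{clm:RWA}. Since $U(t)$ is diagonal, conjugating the diagonal part of $H$ leaves it unchanged and the $-iU^\dagger \dot U$ term exactly cancels the $\bar\sigma_z$ drive, killing the diagonal entirely (to $O(\Delta^2)$, since we are dropping the $O(\Delta^2)$ corrections anyway). The off-diagonal entry $-\frac{\Delta}{2}(B\cos(\omega t)+1)$ gets multiplied by $e^{\mp i\int a(t)\,dt}$. Computing $\int a(t)\,dt = \frac{A+B}{\omega}\sin(\omega t) + \frac{2A_1}{\omega_1}\sin(\omega_1 t+\varphi)$ and invoking the Jacobi–Anger identity \eqref{eq:BesselIdentity} twice — once for each frequency — gives
\begin{equation}
e^{-i\int a(t)\,dt} = \sum_{k,k_1=-\infty}^\infty J_k\!\left(\frac{A+B}{\omega}\right) J_{k_1}\!\left(\frac{2A_1}{\omega_1}\right) e^{-ik\omega t}\, e^{-ik_1(\omega_1 t+\varphi)}.
\end{equation}
Reindexing $k_1 \to -k_1$ (using $J_{-k_1}(z) = (-1)^{k_1}J_{k_1}(z)$, and noting the sum is over all integers so the relabeling is harmless) puts it in exactly the advertised form $\chi = \sum_{k,k_1} J_k\!\left(\frac{A+B}{\omega}\right)J_{k_1}\!\left(\frac{2A_1}{\omega_1}\right) e^{ik_1(\omega_1 t+\varphi) - ik\omega t}$; the conjugate $\chi^*$ appears in the lower-left entry, as required.

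The main obstacle — and really the only subtle point — is bookkeeping the factors of two and the placement of $\varphi$ correctly so the final $\chi$ matches \eqref{eq:AlgBNoiseInV} verbatim: the error term enters $H$ as $A_1\cos(\omega_1 t+\varphi)\bar\sigma_z$ with no $\frac12$ prefactor (unlike the $-\frac{A}{2}\cos(\omega t)\sigma_z$ pieces, which do carry it implicitly through the matrix normalization), so it contributes $\frac{2A_1}{\omega_1}$ rather than $\frac{A_1}{\omega_1}$ to the argument of $J_{k_1}$ after integrating and comparing conventions, while the coherent sum of the $B$ and $A$ longitudinal pieces contributes $\frac{A+B}{\omega}$. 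I would also remark that the $O(\Delta^2)$ terms are untouched by the transformation at leading order because $U$ is generated by $\bar\sigma_z$ and the neglected terms are already of higher order in $\Delta$; no rotating-wave approximation is needed for this claim itself — that step is deferred to the subsequent discussion of when the $k=k_1=0$ term dominates. This completes the derivation.
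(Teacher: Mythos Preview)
Your approach is exactly the paper's: subtract the time-dependent global offset $\tfrac12-\tfrac{B}{2}\cos\omega t$, identify the longitudinal drive $a(t)$, pass to the rotating frame $U=\exp\{\tfrac{i}{2}\bar\sigma_z\int a\,dt\}$ to kill the diagonal, and expand the resulting phase via the Jacobi--Anger identity in each frequency. The only slip is a sign in your bookkeeping: the $\bar\sigma_z$ coefficient after subtracting the offset is $-\tfrac{A+B}{2}\cos\omega t + A_1\cos(\omega_1 t+\varphi)$, so in the convention $H=\tfrac12(-a\bar\sigma_z-\ldots)$ one has $a(t)=(A+B)\cos\omega t - 2A_1\cos(\omega_1 t+\varphi)$ (not $+2A_1$), which produces $\chi$ directly without any reindexing --- your relabeling $k_1\to -k_1$ would leave a stray $(-1)^{k_1}$ that is not ``harmless'' for verbatim matching (though it is physically absorbable into $\varphi$).
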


\begin{proof}

First the global (time dependent) energy offset $\frac{1}{2}+B/2\cos(\omega t)$ is removed.  $O(\Delta^2)$ can be neglected since the Hamiltonian is applied for duration $O(1/\Delta)$. We get

\begin{equation}
\widetilde H_{\mathcal B}\bigg|_V  = \prnt{-\frac{B+A}{2}\cos(\omega t) + A_1 \cos(\omega_1 t +\varphi)} \bar \sigma_z -\frac{\Delta}{2}(B\cos(\omega t)+1)\bar \sigma_x
\end{equation}

Next we choose a rotating frame where the diagonal is zero in a similar way to equation \ref{eq:rotatingNoiseV}, but with $a(t)=(A+B)\cos(\omega t) - 2A_1 \cos (\omega_1 t+\varphi)$:

\begin{equation}
\begin{split}
\widetilde H_{\mathcal B}'\bigg|_V  &=
\begin{bmatrix}
0 &  -\frac{\Delta}{2} (B\cos(\omega t)+1) e^{-i\int a(t)dt}
\\
-\frac{\Delta}{2} (B\cos(\omega t)+1) e^{i\int a(t)dt} & 0
\end{bmatrix}
\\
&= -\frac{\bar \sigma_+}{2} \cdot  {\frac{\Delta}{2} (B\cos(\omega t)+1)} \exp\prnttt{-i \prnt{\frac{A+B}{\omega} \sin(\omega t)-\frac{2A_1}{\omega_1} \sin(\omega_1 t +\varphi)}} + h.c.
\\
&= -\frac{\bar \sigma_+}{2} \cdot {\frac{\Delta}{2} (B\cos(\omega t)+1)} \sum_{k,k_1=-\infty}^\infty J_k\prnt{\frac{A+B}{\omega}}e^{-ik\omega t} \cdot J_{k_1} \prnt{\frac{2A_1}{\omega_1}} e^{ik_1(\omega_1 t+\varphi)} + h.c.,
\end{split}
\end{equation}
where $\bar\sigma_+=\bar\sigma_x+\bar\sigma_y$. 
\end{proof}

Similarly we derive the transformation of the three level system in equation \ref{eq:AlgB3LSNoise}. 

\begin{myclaim}
Let 
\begin{equation}
H_{\mathcal{B}}=
\begin{bmatrix}
\frac{1}{2}-(B+\frac{A}{2})\cos(\omega t) &  -\frac{\Delta}{2} (B\cos(\omega t)+1) & \eta 
\\
-\frac{\Delta}{2} (B\cos(\omega t)+1) & \frac{1}{2}+ \frac{A}{2}\cos(\omega t) & 0 
\\
\eta & 0 & 1
\end{bmatrix}
\end{equation}
By rotating around $\bar \sigma_z$ the Hamiltonian can be approximated by equation \ref{eq:AlgB3LSNoise}:
\begin{equation} 
 H'_{\mathcal B} = \begin{bmatrix}
 0 &  -\frac{\Delta}{2} J_0\prnt{\frac{B+A}{\omega}} & \eta J_0\prnt{\frac{B+A/2}{\omega}}
\\
-\frac{\Delta}{2} J_0\prnt{\frac{B+A}{\omega}} & 0 & 0 
\\
\eta J_0\prnt{\frac{B+A/2}{\omega}} & 0 & \frac{1}{2}
\end{bmatrix}.
\end{equation}
\end{myclaim}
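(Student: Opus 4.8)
The plan is to reuse, almost verbatim, the rotating-frame-plus-rotating-wave recipe of Claim~\ref{clm:RWA} and of the preceding claim, now for a $3\times 3$ Hamiltonian. First I pass to the interaction picture generated by the diagonal unitary $U(t)=\mathrm{diag}\big(e^{i\theta_0(t)},e^{i\theta_1(t)},e^{i\theta_2(t)}\big)$ and use the standard relation $H'=U^\dagger H U-iU^\dagger\dot U$ (as in Eq.~\ref{eq:rotatingNoiseV}): under it the off-diagonal entry $(H_{\mathcal B})_{jk}$ acquires the phase $e^{i(\theta_k-\theta_j)}$ and the $j$-th diagonal entry picks up $+\dot\theta_j$. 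Choosing $\dot\theta_0=-\big(\frac12-(B+\frac A2)\cos\omega t\big)$, $\dot\theta_1=-\big(\frac12+\frac A2\cos\omega t\big)$ and $\dot\theta_2=\frac12-1$ makes the $\ket{\bar 0},\ket{\bar 1}$ diagonal entries vanish and leaves $\frac12$ on the $\ket{\bar 2}$ diagonal, which is exactly the target frame ``$\bra{\bar 0}H_{\mathcal B}\ket{\bar 0}=\bra{\bar 1}H_{\mathcal B}\ket{\bar 1}=0$'' used in the main text; along the way the $O(\Delta^2)$ corrections of Eq.~\ref{eq:3ls} are dropped, which is legitimate because the evolution lasts only time $O(1/\Delta)$.

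Second, I integrate the phases, $\theta_1-\theta_0=-\frac{A+B}{\omega}\sin\omega t$ and $\theta_2-\theta_0=-\frac{B+A/2}{\omega}\sin\omega t$ (the pair $(\bar 1,\bar 2)$ never couples since $(H_{\mathcal B})_{12}=0$), and expand the resulting exponentials with the Bessel expansion~\ref{eq:BesselIdentity}. The $\ket{\bar 0}$--$\ket{\bar 1}$ coupling becomes $-\frac\Delta2(B\cos\omega t+1)\sum_k J_k\!\big(\frac{A+B}{\omega}\big)e^{-ik\omega t}$ and the $\ket{\bar 0}$--$\ket{\bar 2}$ coupling becomes $\eta\sum_k J_k\!\big(\frac{B+A/2}{\omega}\big)e^{-ik\omega t}$. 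Applying the rotating-wave approximation (valid since $\omega\gg\Delta,\eta$) keeps only the time-independent, $k=0$ parts: the $\ket{\bar 0}$--$\ket{\bar 2}$ entry is then simply $\eta J_0\!\big(\frac{B+A/2}{\omega}\big)$, while for the $\ket{\bar 0}$--$\ket{\bar 1}$ entry the constant ``$1$'' contributes $J_0\!\big(\frac{A+B}{\omega}\big)$ and the ``$B\cos\omega t$'' piece contributes $\frac B2\big(J_1+J_{-1}\big)\!\big(\frac{A+B}{\omega}\big)=0$ because $J_{-1}=-J_1$; hence the $B$-modulation of that coupling drops out of the static part and we are left with $-\frac\Delta2 J_0\!\big(\frac{A+B}{\omega}\big)$. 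Collecting the three entries reproduces $H'_{\mathcal B}$ of Eq.~\ref{eq:AlgB3LSNoise}.

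The only real delicacy is the status of the two stacked approximations. Dropping $O(\Delta^2)$ is controlled by the $O(1/\Delta)$ runtime, but the RWA itself fails precisely in the interesting corners: near a resonance $k\omega\approx k'\omega'$, or when $\frac{A+B}{\omega}$ (resp.\ $\frac{B+A/2}{\omega}$) lies close to a zero of $J_0$, so that the discarded sidebands are no longer small compared with the surviving $k=0$ term. That is exactly the regime where Fig.~\ref{fig:3LS} shows the simple RWA prediction (one peak per Bessel root) splitting into a doublet, and where a Floquet treatment~\cite{Shirley65} is needed. I would therefore state the result as an approximation — as the authors do — rather than attempt a rigorous error bound on the RWA inside this claim.
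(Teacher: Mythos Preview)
Your proof is correct and follows essentially the same rotating-frame-plus-RWA route as the paper: a diagonal gauge transformation to kill the $\ket{\bar 0},\ket{\bar 1}$ diagonal entries, the Bessel expansion~\eqref{eq:BesselIdentity}, and retention of the $k=0$ harmonics. The only cosmetic difference is that you absorb the $\tfrac12 I$ shift into the phases $\dot\theta_j$ from the outset, whereas the paper first rotates with a $U$ acting only on $\ket{\bar 0},\ket{\bar 1}$ and then subtracts $\tfrac12 I$ by hand; your explicit check that the $B\cos\omega t$ piece of the $(\bar 0,\bar 1)$ coupling contributes $\tfrac{B}{2}(J_1+J_{-1})=0$ to the static part is a nice detail the paper leaves implicit.
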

\begin{proof}
Initially we  change the  reference frame by the first equality of equation \ref{eq:rotatingNoiseV}, with 
\begin{equation}
U=\exp\prnttt{{i}\prnt{B+\frac{A}{2}}\frac{\sin(\omega t)}{\omega}\ketbra{\bar 0}{\bar 0} - i\frac{A \sin(\omega t)}{2\omega} \ketbra{\bar 1}{\bar 1}},
\end{equation}
we get:
\begin{equation}
H_{\mathcal{B}}'=
\begin{bmatrix}
\frac{1}{2} &  -\frac{\Delta}{2} (B\cos(\omega t)+1) e^{-i\frac{A+B}{\omega}\sin(\omega t)} & \eta e^{-i\frac{B+A/2}{\omega}\sin(\omega t)}
\\
-\frac{\Delta}{2} (B\cos(\omega t)+1)  e^{i\frac{A+B}{\omega}\sin(\omega t)}  & \frac{1}{2} & 0 
\\
\eta  e^{i\frac{B+A/2}{\omega}\sin(\omega t)} & 0 & 1
\end{bmatrix}
\end{equation}
The diagonal can be adjusted by subtracting $\frac{1}{2}I$. The proof is concluded by using the Bessel identity in equation \ref{eq:BesselIdentity}, and by neglecting all but the zero frequency terms (rotating wave approximation).

\end{proof}

\begin{figure} [h]
\includegraphics[scale=0.65, trim=2.2cm 10cm 1cm 10cm,clip=true]{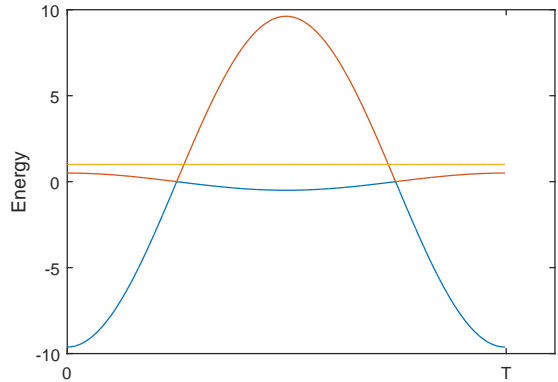}
\caption{The spectrum of the noiseless $H_{\mathcal{B}}$ over one period. The parameters are $n=16, A=1, B=9.1193$. Note that the yellow energy level is outside the invariant subspace $V$.}
\end{figure}

\subsection{Units consistency}
In our analysis we have generally ignore units (e.g., energy, frequency), specially because computational/query complexity is invariant to multiplicative factors. Here we rewrite the main results while keeping the units consistent. The problem Hamiltonian is normally given with an energy scale $\varepsilon$  ($\hbar=1$):
\begin{equation}
H_{p}=\varepsilon(I_N-\ket{y}\bra{y})
\end{equation}
The Hamiltonian evolution in Algorithm $\mathcal A$ is the following:
\begin{equation}
\begin{split}
H_{G}\left(s(t)\right)&=\varepsilon \prntt{
 {(1-s(t))\cdot (I_N-\ket{u}\bra{u})+ s(t)\cdot(I_N-\ket{y}\bra{y})}}
 \\
 s(t)&=\frac{1-a\cos(\omega t)}{2},
\end{split}
\end{equation}
where $a\in [0,1]$ is the dimensionless amplitude of the control function $s(t)$. Note that the minimal energy gap is $\Delta=2^{-n/2} \varepsilon$.

Claim \ref{clm:equiv} takes the following form:
\setcounter{myclaim}{0}
\begin{myclaim}
The projection of $H_{\mathrm G}(s(t))$ on $V$ satisfies:
\begin{equation} 
\begin{split}
H_{\mathrm G}(s(t)) \bigg|_V &= \varepsilon\prntt{\frac{I_2}{2} + \left(s(t)-\frac{1}{2}\right) \sqrt{1-\delta^2} \bar\sigma_z -\frac{\delta}{2}\bar\sigma_x }
\\
&=\varepsilon\prntt{\frac{I_2}{2} + \left(\frac{-a\cos(\omega t)}{2}\right) \sqrt{1-\delta^2} \bar\sigma_z -\frac{\delta}{2}\bar\sigma_x }
\end{split}
\end{equation}
where $\delta=\braket{y|u}$ is dimensionless. The operators $\bar \sigma_x ,\bar \sigma_z$ act on the states
\begin{equation}
\begin{split}
\ket{\bar 0}&=\sqrt{\frac{1+\sqrt{1-\delta^{2}}}{2}} \ket{u} + \sqrt{\frac{1-\sqrt{1-\delta^{2}}}{2}}\ket{u^\perp} \\
\ket{\bar 1}&=\sqrt{\frac{1-\sqrt{1-\delta^{2}}}{2}} \ket{u} - \sqrt{\frac{1+\sqrt{1-\delta^{2}}}{2}}\ket{u^\perp}
\end{split}
\end{equation}
where  ${\ket{u^\perp}\mathrel{\mathop:}=
 \frac{\ket{y}-\delta\ket{u}} {\sqrt{1-\delta^2}}}$ is the vector orthogonal to $\ket{u}$ in $V$.
\end{myclaim}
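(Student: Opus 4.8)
The plan is to reduce this statement to the unit-free Claim~\ref{clm:equiv}, already proven above, by using that the whole construction is \emph{linear} in the overall energy scale $\varepsilon$.

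First I would observe that with $H_p=\varepsilon(I_N-\ketbra{y}{y})$ the driven Hamiltonian of Algorithm~$\mathcal A$ is exactly $\varepsilon$ times the Hamiltonian $H_{\mathrm G}(s(t))$ of Eq.~\eqref{eq:AdiGrover}; call this rescaled operator $H_{\mathrm G}^{(\varepsilon)}(s(t)):=\varepsilon\,H_{\mathrm G}(s(t))$. Since multiplication by a scalar has the same invariant subspaces as the original operator, $V=\mathrm{span}\{\ket u,\ket y\}$ is invariant under $H_{\mathrm G}^{(\varepsilon)}(s(t))$ for every $s$ by the invariant-subspace claim proven above, so the restriction $H_{\mathrm G}^{(\varepsilon)}(s(t))\big|_V$ is well defined and equals $\varepsilon\cdot H_{\mathrm G}(s(t))\big|_V$, because scalar multiplication commutes with the projection $P_V$.

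Next I would apply Claim~\ref{clm:equiv} verbatim to $H_{\mathrm G}(s(t))\big|_V$, making the identification $\Delta\leftrightarrow\delta:=\braket{y|u}$. The point to stress is that $\delta$ is \emph{dimensionless} in both pictures, being an inner product of normalized state vectors, so the basis $\ket{\bar 0},\ket{\bar 1}$ and the auxiliary vector $\ket{u^\perp}$ are literally unchanged: that basis is the eigenbasis of $\bar\sigma_z$ restricted to $V$ and depends only on the geometry of the pair $\{\ket u,\ket y\}$, i.e.\ only on $\delta$. Multiplying the resulting matrix by $\varepsilon$ gives the first displayed equality; the second follows by the trivial substitution $s(t)=(1-a\cos\omega t)/2$, so that $s(t)-\tfrac12=-\tfrac a2\cos(\omega t)$. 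For the accompanying remark, I would note that for the uniform superposition $\braket{y|u}=2^{-n/2}$, hence at $s=\tfrac12$ the restriction is $\tfrac{\varepsilon}{2}\big(I_2-2^{-n/2}\bar\sigma_x\big)$, whose spectral gap --- the minimal gap along the whole sweep --- is $2^{-n/2}\varepsilon$, matching $\Delta=2^{-n/2}\varepsilon$.

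There is essentially no analytic content to overcome here; the only real work is bookkeeping, namely keeping the \emph{dimensionless} overlap $\delta=\braket{y|u}$ strictly apart from the \emph{physical} minimal gap $\Delta=2^{-n/2}\varepsilon$ which now carries the energy units, and checking downstream that the rotating-frame transformation and the Bessel expansion of Claim~\ref{clm:RWA} only ever involve dimensionless ratios such as $a$, $A/\omega$ and $\omega/\varepsilon$, so that the $O(\sqrt{2^n})$ runtime and the robustness conclusions are genuinely independent of the choice of units.
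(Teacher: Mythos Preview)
Your proposal is correct and matches the paper's approach: the paper does not give a separate proof of this dimensional version but simply presents it as ``Claim~\ref{clm:equiv} takes the following form,'' relying implicitly on exactly the linearity-in-$\varepsilon$ argument you spell out. Your explicit separation of the dimensionless overlap $\delta=\braket{y|u}$ from the energy-bearing gap $\Delta=2^{-n/2}\varepsilon$ is precisely the bookkeeping the Units-consistency subsection is meant to highlight.
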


The rotating frame approximation holds when $\omega \gg \Delta=2^{-n/2}\varepsilon$. The run time of the algorithm in this case is inverse proportional to the Rabi frequency $\Omega=\varepsilon \delta \cdot J_0\prnt{\frac{\sqrt{1-\delta^2}a\varepsilon}{\omega}}$. On the other hand, when $ \varepsilon  a \omega \ll \Delta^2$, the process is adiabatic. 

Algorithm $\mathcal B$ is defined using an additional dimensionless variable $b$:
\begin{equation}
\begin{split}
H_{\mathcal{B}}(t)&=\varepsilon \prnt{(I_N-\ket{u}\bra{u})\cdot \frac{1+a\cos(\omega t)}{2} 
+(I_N-\ket{y}\bra{y})\cdot\frac{1-a\cos(\omega t)}{2} - b\cos(\omega t) \ketbra{u}{u}}
\\
H_{\mathcal{B}}\bigg|_V&=\varepsilon
\begin{bmatrix}
\frac{1}{2}-(b+\frac{a}{2})\cos(\omega t) &  -\frac{\delta}{2} (b\cos(\omega t)+1)
\\
-\frac{\delta}{2} (b\cos(\omega t)+1) & \frac{1}{2}+ \frac{a}{2}\cos(\omega t) 
\end{bmatrix}+\varepsilon(\abs a+\abs b )\cdot O(\delta ^2),
\end{split}
\end{equation}

and by adding a unitary error from $V$ to $V^\perp$ it takes the form:
\begin{equation} 
\begin{split}
&H_{\mathcal{B}}=
\varepsilon
\begin{bmatrix}
\frac{1}{2}-(b+\frac{a}{2})\cos(\omega t) &  -\frac{\delta}{2} (b\cos(\omega t)+1) & \eta 
\\
-\frac{\delta}{2} (b\cos(\omega t)+1) & \frac{1}{2}+ \frac{a}{2}\cos(\omega t) & 0 
\\
\eta & 0 & 1
\end{bmatrix}+\varepsilon ( \abs a + \abs b) \cdot O(\delta^2).
\end{split}
\end{equation}
Finally, the optimal values for $a,b,\omega$ are in proximity to the roots of $J_0\prnt{\frac{\varepsilon(b+a/2)}{\omega}}$.

\end{document}
%